\def\draft{0}
\def\doubleblind{0}
\documentclass[11pt,letterpaper]{article}

\usepackage[margin=1in]{geometry}
\usepackage[utf8]{inputenc}
\usepackage{amsthm}
\usepackage{amsthm, amsmath, amssymb, mathtools}
\usepackage{bm,bbm}
\usepackage{graphicx}
\usepackage{color,xcolor}
\usepackage{paralist,enumitem}
\usepackage{mathrsfs}
\usepackage[normalem]{ulem}
\usepackage{tabularx}
\usepackage{tikz}
\usepackage{caption,comment}
\usepackage{algorithmicx}
\usepackage{algorithm}
\usepackage{algpseudocode}
\usepackage{lipsum}
\newcounter{algsubstate}
\renewcommand{\thealgsubstate}{\alph{algsubstate}}

\algnewcommand\algorithmicinput{\textbf{Input:}}
\algnewcommand\Input{\item[\algorithmicinput]}

\algnewcommand\algorithmicoutput{\textbf{Output:}}
\algnewcommand\Output{\item[\algorithmicoutput]}

\algnewcommand\algorithmicgoal{\textbf{Goal:}}
\algnewcommand\Goal{\item[\algorithmicgoal]}

\newcommand{\blind}[2]{{\ifnum\draft=1\color{purple}\fi \ifnum\doubleblind=1#2\fi\ifnum\doubleblind=0#1\fi\ifnum\doubleblind=2$\{$ #1 $\vert$ #2 $\}$\fi}}

\makeatletter
\newcommand{\algmargin}{\the\ALG@thistlm}
\algnewcommand{\parState}[1]{\State%
  \parbox[t]{\dimexpr\linewidth-\algmargin}{\strut #1\strut}}
\usepackage[colorlinks=true, allcolors=blue]{hyperref}
\usepackage[capitalize,noabbrev,nameinlink]{cleveref}

\newcommand{\vnote}[1]{\ifnum\draft=1\textcolor{orange}{[\textbf{Santhoshini:} #1]}\fi}
\newcommand{\Exp}{\mathbb{E}}
\newcommand{\Var}{\mathsf{Var}}
\newcommand{\cnt}{\mathsf{count}}
\newcommand{\estdeg}{\mathsf{est}\text{-}\mathsf{deg}}
\newcommand{\appdeg}{\mathsf{approx}\text{-}\mathsf{deg}}
\newcommand{\storedeg}{\mathsf{store}\text{-}\mathsf{deg}}
\usepackage[style=alphabetic, backend=biber, minalphanames=3, maxalphanames=4, maxbibnames=99, maxcitenames=4]{biblatex}

\addbibresource{references.bib}

\usepackage{amsthm}
\usepackage{thmtools,thm-restate}

\numberwithin{equation}{section}
\declaretheoremstyle[bodyfont=\it,qed=\qedsymbol]{noproofstyle}

\declaretheorem[name=Observation,numbered=no]{observation*}

\declaretheorem[numberlike=equation]{theorem}

\declaretheorem[name=Theorem,numbered=no]{theorem*}

\declaretheorem[name=Lemma,numbered=no]{lemma*}

\declaretheorem[numberlike=equation]{corollary}
\declaretheorem[name=Corollary,numbered=no]{corollary*}

\declaretheorem[name=Proposition,numbered=no]{proposition*}

\declaretheorem[numberlike=equation]{claim}
\declaretheorem[name=Claim,numbered=no]{claim*}

\declaretheorem[name=Conjecture,numbered=no]{conjecture*}

\declaretheorem[name=Question,numbered=no]{question*}

\declaretheoremstyle[bodyfont=\it]{defstyle} 

\declaretheorem[numberlike=equation,style=defstyle]{definition}
\declaretheorem[unnumbered,name=Definition,style=defstyle]{definition*}

\declaretheorem[unnumbered,name=Example,style=defstyle]{example*}

\declaretheorem[unnumbered,name=Notation=defstyle]{notation*}

\declaretheorem[unnumbered,name=Construction,style=defstyle]{construction*}

\declaretheoremstyle[]{rmkstyle}

\crefname{claim}{Claim}{Claims}
\crefname{fact}{Fact}{Facts}
\crefname{corollary}{Corollary}{Corollaries}

\newcommand{\mcut}{\mathsf{Max}\text{-}\mathsf{CUT}}
\newcommand{\mdcut}{\mathsf{Max}\text{-}\mathsf{DICUT}}

\newcommand{\dcut}{\mathsf{DICUT}}




\newcommand{\Bern}{\mathsf{Bern}}


\newcommand{\nbrtype}[4]{\mathsf{nbhdtype}_{#1,#2}^{#3}(#4)}

\newcommand{\DbAllTypesDeg}[3]{\mathbf{Typ}_{#1}^{#2,#3}}

\newcommand{\Unif}[1]{\mathsf{Unif}(#1)}



\newcommand{\vecsigma}{\boldsymbol{\sigma}}

\newcommand{\EdgeDist}[3]{\mathsf{EdgeNbhdTypeDist}_{#1;#2}^{#3}}

\newcommand{\tvdist}[2]{\mathsf{tvd}(#1,#2)}

\newcommand{\Local}{\mathtt{Local}}

\makeatletter
\newenvironment{breakablealgorithm}
  {
   \begin{center}
     \refstepcounter{algorithm}
     \hrule height.8pt depth0pt \kern2pt
     \renewcommand{\caption}[2][\relax]{
       {\raggedright\textbf{\fname@algorithm~\thealgorithm} ##2\par}%
       \ifx\relax##1\relax 
         \addcontentsline{loa}{algorithm}{\protect\numberline{\thealgorithm}##2}%
       \else 
         \addcontentsline{loa}{algorithm}{\protect\numberline{\thealgorithm}##1}%
       \fi
       \kern2pt\hrule\kern2pt
     }
  }{
     \kern2pt\hrule\relax
   \end{center}
  }
\makeatother

\allowdisplaybreaks

    \title{Near-optimal streaming approximation for Max-DICUT in sublinear space using two passes}

\ifnum\doubleblind=0
\author{Santhoshini Velusamy\thanks{University of Waterloo, Ontario, Canada. Supported in part by NSF award CCF 2348475 when the author was affiliated with Toyota Technological Institute at Chicago. Email: \texttt{santhoshini.velusamy@uwaterloo.ca}}}
\fi
\ifnum\doubleblind=1
\author{Anonymous authors}
\fi

\date{}
\sloppy
\begin{document}
\maketitle

\begin{abstract}
The Max-DICUT problem has gained a lot of attention in the streaming setting in recent years, and has so far served as a canonical problem for designing algorithms for general constraint satisfaction problems (CSPs) in this setting. A seminal result of Kapralov and Krachun [STOC 2019] shows that it is impossible to beat $1/2$-approximation for Max-DICUT in sublinear space in the single-pass streaming setting, even on bounded-degree graphs. In a recent work, Saxena, Singer, Sudan, and Velusamy [SODA 2025] prove that the above lower bound is tight by giving a single-pass algorithm for bounded-degree graphs that achieves $(1/2-\epsilon)$-approximation in sublinear space, for every constant $\epsilon>0$. For arbitrary graphs of unbounded degree, they give an $O(1/\epsilon)$-pass $O(\log n)$ space algorithm. Their work left open the question of obtaining $1/2$-approximation for arbitrary graphs in the single-pass setting in sublinear space. We make progress towards this question and give a two-pass algorithm that achieves $(1/2-\epsilon)$-approximation in sublinear space, for every constant $\epsilon>0$.
\end{abstract}

\section{Introduction}
The maximum directed cut problem ($\mdcut$) is the analogue of the maximum cut ($\mcut$) problem for directed graphs. In particular, given a directed graph $G(V,E)$, the $\mdcut$ problem asks to find an ordered bipartition (a ``left'' set $L$ and a ``right'' set $R$)  of the vertex set $V$ that maximizes the number of edges that go from $L$ to $R$. In contrast, in $\mcut$, we measure all the edges that cross $L$ and $R$. Both $\mcut$ and $\mdcut$ are \textsf{NP Hard} and so, we study algorithms that approximately solve these problems. An $\alpha$-approximation algorithm is one that outputs a solution that has a value which is at least $\alpha$ fraction of the optimum.
Besides being fundamental problems of their own right, $\mcut$ and $\mdcut$ also belong to a large class of combinatorial optimization problems called the constraint satisfaction problems (CSPs) which capture several fundamental optimization problems in computer science. CSPs are well studied in theoretical computer science, owing in part to their structure which allows for finite characterizations of the infinite set of problems they contain, in several computational settings. Their study has led to the discovery of several important algorithmic and complexity-theoretic tools including SDP approximations, the PCP theorem, and \textsf{Unique Games}.

In many computational settings, $\mcut$ has successfully served as a canonical problem in the study of CSPs. For example, in the polynomial time setting, Goemans and Williamson \cite{DBLP:journals/jacm/GoemansW95} gave the first SDP-based approximation algorithm for $\mcut$. And on the lower bound side, conditioned on the \textsf{Unique Games} conjecture, Khot, Kindler, Mossell, and O'Donnell \cite{DBLP:journals/siamcomp/KhotKMO07} constructed tight integrality gap instances showing that the Goemans-Williamson SDP for $\mcut$ is optimal. In a breakthrough follow-up work, Raghavendra \cite{DBLP:conf/stoc/Raghavendra08} generalized both these works to all CSPs and gives an optimal SDP-based characterization for the approximability of every CSP in polynomial time. However, in the streaming setting, $\mcut$ does not capture all CSPs.

In the single-pass streaming model of computation, the constraints of a CSP instance (or edges of a graph in the case of $\mcut,\mdcut$) arrive in a stream. The streaming algorithm is allowed one pass over this stream and is required to compute an approximation to the optimum \emph{value} of the instance, using limited space. The amount of space available to the algorithm is typically parameterized by $n$, the number of variables in the instance (or vertices of the input graph in the case of $\mcut,\mdcut$). For every CSP, there exists a trivial sampling-based algorithm \cite{AG09,DBLP:conf/stoc/Trevisan09} that can achieve $(1-\epsilon)$-approximation using $\tilde{O}(n)$ space. Therefore, the interesting regime is when the algorithm is allowed only $o(n)$ space. Surprisingly, $\mcut$ has no non-trivial approximation in this regime! In particular, $1/2$-approximation is trivial (just output $1/2$ for every input graph) since any graph has a bipartition with at least half the edges crossing the bipartition.
A series of works \cite{DBLP:conf/soda/KapralovKS15,DBLP:conf/soda/KapralovKSV17,DBLP:conf/stoc/KapralovK19}, culminating with the breakthrough result of Kapralov and Krachun \cite{DBLP:conf/stoc/KapralovK19} show that for any constant $\epsilon>0$, any single-pass streaming algorithm that achieves $(1/2+\epsilon)$-approximation for $\mcut$ requires $\Omega(n)$ space. A follow-up work due to Chou, Golovnev, Sudan, Velingker, and Velusamy \cite{DBLP:conf/stoc/ChouGSVV22} proves optimal inapproximability results for multiple CSPs in $o(n)$ space, generalizing techniques from the $\mcut$ lower bound \cite{DBLP:conf/stoc/KapralovK19}. 

On the algorithmic side, the first non-trivial approximation algorithm was discovered by Guruswami, Velingker, and Velusamy \cite{DBLP:conf/approx/GuruswamiVV17} for $\mdcut$. While $\mdcut$ has a trivial $1/4$-approximation, they gave a single-pass algorithm that achieves $(2/5-\epsilon)$-approximation using only $O(\log n)$ space, for every constant $\epsilon>0$. Chou, Golovnev, and Velusamy \cite{DBLP:conf/focs/ChouGV20} later achieved $(4/9-\epsilon)$-approximation in $O(\log n)$ space. They also prove that the $4/9$ factor is optimal in $o(\sqrt{n})$ space, generalizing the techniques from the $\Omega(\sqrt{n})$ space lower bound of $\mcut$ due to Kapralov, Khanna, and Sudan \cite{DBLP:conf/soda/KapralovKS15}. They further generalized their results to prove a $O(\log n)$ vs $\Omega(\sqrt{n})$ space dichotomy theorem for every $\mathsf{Max}$-$2\mathsf{CSP}$. This served as the basis for their subsequent works with Sudan \cite{chou2022approximabilitybooleancspslinear, DBLP:journals/jacm/ChouGSV24} which prove a $O(\log^3 n)$ vs $\Omega(\sqrt{n})$ space dichotomy characterization for every CSP.\footnote{Their dichotomy characterization is for \emph{dynamic} streaming algorithms which allows deletions and can process arbitrarily long streams.} Saxena, Singer, Sudan, and Velusamy \cite{DBLP:conf/focs/SaxenaS0V23} show that the $\Omega(\sqrt{n})$ lower bound in \cite{DBLP:conf/focs/ChouGV20} is tight and give a $0.485$-approximation algorithm for $\mdcut$ in $\tilde{O}(\sqrt{n})$ space. In a later work \cite{DBLP:conf/soda/SaxenaS0V25}, the same authors also show that it is possible to achieve $(1/2-\epsilon)$-approximation for \emph{bounded-degree} (i.e., maximum degree bounded by a constant) graphs in $O(n^{1-\Omega(1)})$ space, for every $\epsilon>0$. A trivial reduction from $\mcut$ shows that the factor $1/2$ is also optimal for $\mdcut$ in $o(n)$ space.

In the multi-pass setting, a streaming algorithm is allowed to do multiple passes over the same stream of constraints. Assadi and Nagargoje \cite{DBLP:conf/stoc/AssadiN21} show that any $p$-pass algorithm for $\mcut$ achieving $(1-\epsilon)$-approximation requires at least $\Omega(n^{1-O(\epsilon p)})$ space. A recent breakthrough result of Fei, Minzer, and Wang \cite{DBLP:journals/corr/abs-2503-23404} shows that for any $\epsilon>0$, any $p$-pass algorithm for $\mcut$ (also $\mdcut$) achieving $(1/2+\epsilon)$-approximation requires at least $\Omega(n^{1/3}/p)$ space.  In contrast, for \emph{dense} instances of average degree at least $n^\delta$, a two-pass streaming algorithm due to Bhaskara, Dharuki, and Venkatasubramanian \cite{DBLP:conf/icalp/BhaskaraDV18}, achieves $(1-\epsilon)$-approximation for $\mcut$ (also $\mdcut)$ using $\tilde{O}(n^{1-\delta})$ space. An algorithm due to Saxena, Singer, Sudan, and Velusamy achieves $(1/2-\epsilon)$-approximation on arbitrary graphs in $O(1/\epsilon)$ passes using $O(\log n)$ space \cite{DBLP:conf/soda/SaxenaS0V25}. In a prior work \cite{DBLP:conf/soda/SaxenaS0V23}, the same authors give a two-pass algorithm that achieves $0.485$-approximation using $O(\log n)$ space. A recent result due to Fei, Minzer, and Wang \cite{fei2025dichotomytheoremmultipassstreaming} gives a dichotomy characterization in the multi-pass setting based on the basic LP relaxation for every CSP. In particular, they show that the techniques from \cite{DBLP:conf/stoc/Trevisan01,DBLP:conf/soda/SaxenaS0V25} can be combined with the distributed algorithms of Yoshida \cite{Yos11} to obtain a streaming algorithm that solves the LP relaxation for every CSP in $O(1/\epsilon^2)$ passes and $O(\log n)$ space. And to beat the approximation achieved  by the basic LP, any $p$-pass streaming algorithm requires at least $\Omega(n^{1/3}/p)$ space. 

The connection between streaming algorithms and LP relaxations was also identified in a concurrent work by Singer, Tulsiani, and Velusamy \cite{singer2025sketchingapproximationslpapproximations}. They conjecture that for every CSP, there must be a single-pass $o(n)$ space streaming algorithm that solves its LP relaxation, and any algorithm beating this factor must require at least $\Omega(n)$ space. For $\mdcut$, its basic LP achieves $1/2$-approximation and so their conjecture states that there is a single-pass algorithm that achieves $1/2$-approximation in $o(n)$ space.
In this work, we make progress towards this conjecture and prove it in the two-pass setting.

\begin{theorem}[Main result]\label{thm:main}
   For every $\epsilon>0$, there is a two-pass streaming algorithm that achieves $(1/2-\epsilon)$-approximation for $\mdcut$ on any $n$-vertex graph using space $O(n^{1-\Omega(1)})$.
\end{theorem}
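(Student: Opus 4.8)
The plan is to reduce \Cref{thm:main} to estimating the value of a suitable \emph{local} (oblivious) rounding scheme for $\mdcut$, and then to show that this value can be estimated in two passes and sublinear space. The starting point is the fact --- which underlies the bounded-degree single-pass result of Saxena, Singer, Sudan and Velusamy \cite{DBLP:conf/soda/SaxenaS0V25} and the LP-based multi-pass framework of \cite{fei2025dichotomytheoremmultipassstreaming} --- that the basic LP value of $\mdcut$, which is sandwiched between $\opt(G)$ and $2\,\opt(G)$, admits a $(1/2-\epsilon)$-approximation by a rounding scheme $\Obl$ that is $r$-\emph{local} for some $r=O_\epsilon(1)$: the probability that $\Obl$ places $v$ on the left depends only on the in/out-isomorphism type of the radius-$r$ neighborhood of $v$. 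Hence $\val{\Obl}{G}=\sum_{e}\Pr[\Obl\text{ cuts }e]$ is a function solely of the distribution of the radius-$r$ neighborhood type of a uniformly random edge, and satisfies $(1/2-\epsilon)\,\opt(G)\le\val{\Obl}{G}\le\opt(G)$. So it suffices to estimate $\val{\Obl}{G}/|E|$ up to additive $\epsilon$. The obstruction to doing this in few passes is that in an unbounded-degree graph the radius-$r$ neighborhood of an edge can have $\Omega(n)$ vertices; I would circumvent this by allowing $\Obl$ to be \emph{truncated at heavy vertices}. Fix a threshold $\tau=n^\gamma$ for a small constant $\gamma=\gamma(\epsilon)$, call $v$ heavy if $\degout{G}{v}+\degin{G}{v}>\tau$, and let $\Obl$'s decision at $v$ depend on the heavy-truncated radius-$r$ neighborhood of $v$: the BFS tree grown from $v$ that is not expanded through heavy vertices but that records, for each heavy $w$ reached, the pair $(\degout{G}{w},\degin{G}{w})$ together with a $\mathrm{poly}(1/\epsilon)$-size uniform sample of $w$'s incident edges (recursively truncated). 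Since non-heavy vertices have degree at most $n^\gamma$, such a truncated neighborhood has at most $(n^\gamma+\mathrm{poly}(1/\epsilon))^r=n^{O(\gamma r)}$ vertices.

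Next I would reduce to sparse graphs. Subsample the edge stream by keeping each edge independently with probability $q$ (realized by a hash on edges, so the same substream is seen in both passes; the right $q$ is chosen by running $O(\log n)$ guesses in parallel and selecting one using an estimate of $|E|$), so that the resulting graph $G_1$ has $\Theta(n\,\mathrm{polylog}\,n)$ edges. A Chernoff bound for each fixed cut together with a union bound over all $2^{O(n)}$ ordered bipartitions shows $\opt(G_1)/|E(G_1)|=\opt(G)/|E|\pm\epsilon$ with high probability, so it is enough to $(1/2-O(\epsilon))$-estimate $\opt(G_1)/|E(G_1)|$. In $G_1$ there are at most $O(|E(G_1)|/\tau)=O(n^{1-\gamma}\,\mathrm{polylog}\,n)$ heavy vertices; so in Pass~1, I would run a Misra--Gries heavy-hitters sketch over the endpoints of $G_1$'s edges to identify a set $H$ containing all heavy vertices in space $\tilde O(n^{1-\gamma})$, and also record $|E|$ approximately.

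Pass~2 is where the value is estimated. With $H$ known, I would run again over $G_1$ and apply (an adaptation of) the single-pass bounded-degree estimator of \cite{DBLP:conf/soda/SaxenaS0V25} --- now with maximum non-heavy degree $n^\gamma$ and with the pre-identified vertices of $H$ treated as ``atoms'' --- to estimate the distribution of heavy-truncated radius-$r$ neighborhood types of a random edge of $G_1$, hence $\val{\Obl}{G_1}/|E(G_1)|$, up to additive $\epsilon$; along the way the pass can count the exact in/out-degrees of the $O(n^{1-\gamma}\,\mathrm{polylog}\,n)$ vertices of $H$ and of the few low-degree vertices appearing in the relevant truncated neighborhoods. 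The total space is $\mathrm{poly}(n^\gamma,1/\epsilon,\log n)=n^{O(\gamma)}=n^{1-\Omega(1)}$ once $\gamma$ is small enough relative to $r=O_\epsilon(1)$. The algorithm outputs $\hat v$ times its estimate of $|E|$, where $\hat v$ is the returned estimate of $\val{\Obl}{G_1}/|E(G_1)|$; chaining $(1/2-\epsilon)\opt(G_1)\le\val{\Obl}{G_1}\le\opt(G_1)$ with $\opt(G_1)/|E(G_1)|=\opt(G)/|E|\pm\epsilon$ and $\opt(G)\ge|E|/4$, shrinking the output by a $1+O(\epsilon)$ factor so it stays at most $\opt(G)$, and rescaling $\epsilon$, yields the claimed $(1/2-\epsilon)$-approximation.

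The main obstacle is the structural claim in the first paragraph: that $\Obl$ may be taken heavy-truncated-local without losing more than $\epsilon$ in the ratio --- equivalently, that the role of a heavy vertex in a near-optimal (LP) solution is essentially determined by its in/out-degree together with the \emph{distribution} (not the identities) of the local types of its neighbors. Morally this should hold because at a heavy vertex no single incident edge carries non-negligible weight, so the relevant LP quantities and their rounding are stable under resampling a heavy vertex's neighborhood; the real work is making this quantitative, uniformly over all graphs and compatible with truncation at depth $r=O_\epsilon(1)$, and it is exactly this that would distinguish the two-pass $(1/2-\epsilon)$ guarantee from the weaker two-pass $0.485$-approximation of \cite{DBLP:conf/focs/SaxenaS0V23}. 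A second, more mechanical obstacle is checking that the single-pass bounded-degree estimator of \cite{DBLP:conf/soda/SaxenaS0V25} survives the extension to degree $n^\gamma$ and to atomized heavy vertices within the stated space budget.
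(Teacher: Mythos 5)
Your proposal diverges from the paper's route at the crucial point of how unbounded degrees are handled, and it is precisely there that it has a genuine, unproven gap --- one you yourself flag in your last paragraph. Your plan rests on the structural claim that the $(1/2-\epsilon)$-approximate local rounding scheme can be made ``heavy-truncated local,'' i.e.\ that its decision at a vertex is (approximately) determined by a BFS tree that is not expanded through vertices of degree $>n^{\gamma}$ and instead records only their in/out-degrees and a $\mathrm{poly}(1/\epsilon)$-size sample of their incident edges. No such statement is available in \cite{DBLP:conf/soda/SaxenaS0V25} or \cite{DBLP:conf/algosensors/Censor-HillelLS17}: the simulation theorem there is proved only for graphs of constant maximum degree, where the full radius-$r$ neighborhood has constant size and the set of types is finite. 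Making the truncation lemma quantitative (uniformly over all graphs, compatibly with $r=O_\epsilon(1)$ rounds, and stable under replacing a heavy vertex's actual neighborhood by a small sample of it) is not a routine verification; it is the entire difficulty of the problem, and without it the argument does not close. The remaining scaffolding (edge subsampling to preserve $\mathsf{opt}(G)/|E|$ up to additive $\epsilon$, Misra--Gries to find the $O(n^{1-\gamma}\,\mathrm{polylog}\,n)$ heavy vertices, the additive-to-multiplicative conversion via $\mathsf{opt}(G)\ge |E|/4$) is sound but does not touch this core issue.

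The paper takes a different and complete route that avoids ever having to truncate the local algorithm: it applies Trevisan's degree-reduction gadget, splitting each vertex $v$ into $\deg(v)$ copies and resampling edge endpoints among copies, so that the resulting graph $\overline{G}$ has maximum degree $O(1/\epsilon^2)$ and $\mdcut(\overline{G})=\mdcut(G)\pm\epsilon$; the bounded-degree simulation theorem of \cite{DBLP:conf/soda/SaxenaS0V25} then applies verbatim to $\overline{G}$. The technical content is in implementing this reduction in two passes and sublinear space (sampling copies of vertices on the fly, estimating degrees of high-degree parents by subsampling in pass one, storing all edges of low-degree parents in pass two and resampling their copies in post-processing), and in falling back to the Bhaskara--Dharuki--Venkatasubramanian two-pass core-set algorithm when the graph is dense, since the blow-up has $2m$ vertices. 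If you want to salvage your approach, you would either need to prove the heavy-truncation lemma or replace it with a vertex-splitting step of this kind --- at which point you would essentially be reconstructing the paper's argument.
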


\section*{Related work}
\paragraph*{Random order streaming.} Another related streaming model is the random order setting where the input is adversarial but the order in which it arrives in the stream is uniformly random. This additional randomness in the input has been successfully used to design better algorithms for many problems in the streaming setting \cite{DBLP:conf/icalp/MonemizadehMPS17,DBLP:conf/soda/PengS18,DBLP:conf/icalp/BravermanVWY18,DBLP:conf/soda/ChakrabartiG0V20,DBLP:conf/icalp/AssadiB21,DBLP:journals/mst/Bernstein24,velusamy2025optimalstreamingalgorithmdetecting}. In the context of CSPs, for $\mcut$, Kapralov, Khanna, and Sudan \cite{DBLP:conf/soda/KapralovKS15} show that any random order streaming algorithm achieving $(1/2+\epsilon)$-approximation requires at least $\Omega(\sqrt{n})$ space. 
This result was generalized by Saxena, Singer, Sudan, and Velusamy \cite{DBLP:conf/soda/SaxenaS0V23}, to show the approximation resistance of all CSPs that ``weakly support one-wise independence,'' in the random order setting.
On the algorithmic side, for $\mdcut$,  they give a $0.485$-approximation algorithm using $O(\log n)$ space. In a later paper \cite{DBLP:conf/soda/SaxenaS0V25}, they also give a $(1/2-\epsilon)$-approximation algorithm for $\mdcut$ for bounded-degree graphs, using $O(\log n)$ space.
\paragraph*{Decidability and exact computability.} The decidability problem for a CSP asks to output a bit deciding whether an input instance is fully satisfiable or not. Chou, Golovnev, Sudan, and Velusamy \cite{DBLP:journals/jacm/ChouGSV24} show that for every ``non-trivial'' CSP, any streaming algorithm that solves its decision version requires at least $\Omega(n)$ space. A result due to Kol, Paramonov, Saxena, and Yu \cite{DBLP:conf/innovations/KolPSY23} gives a dichotomy theorem in the multi-pass setting (for a constant number of passes) for exactly computing the value of a Boolean CSP. Their characterization is based on the Fourier degree of a Boolean CSP.
 
\section{Technical overview}
In this section, we present a technical overview of the proof of our main result.

There are three main ingredients to our proof: 1) a single-pass $(1/2-\epsilon)$-approximation algorithm of Saxena, Singer, Sudan, and Velusamy \cite{DBLP:conf/soda/SaxenaS0V25} for bounded-degree graphs that uses $O(n^{1-\Omega(1)})$ space, 2) Trevisan's reduction \cite{DBLP:conf/stoc/Trevisan01} from unbounded-degree to bounded-degree graphs that approximately preserves the $\mdcut$ value, and 3) Bhaskara, Dharuki, and Venkatasubramanian's \cite{DBLP:conf/icalp/BhaskaraDV18} two-pass algorithm that achieves $(1-\epsilon)$-approximation on graphs of average degree at least $n^\delta$ using $\tilde{O}(n^{1-\delta})$ space. The following theorems formally state their results.

\begin{theorem}[\cite{DBLP:conf/soda/SaxenaS0V25}]\label{thm:SSSV}
  For every $d \in \mathbb{N}$ and $\epsilon>0$, there is a single-pass streaming algorithm that achieves $(1/2-\epsilon)$-approximation for $\mdcut$ on any $n$-vertex graph using space $O(n^{1-\Omega(1)})$.
\end{theorem}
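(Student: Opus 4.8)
Fix $d\in\mathbb N$ and $\epsilon>0$; it suffices to handle graphs $G$ of maximum degree at most $d$. The key idea is that one should \emph{not} try to estimate $\opt(G)$ directly — this is impossible from local information even on bounded-degree graphs — but rather estimate the basic LP relaxation $\mathsf{LP}(G)=\max_{x\colon V\to[0,1]}\sum_{(u,v)\in E}\min\{x_u,\,1-x_v\}$ and output half of its value. Indeed, $\opt(G)\le\mathsf{LP}(G)$ since the LP is a relaxation, and $\tfrac12\mathsf{LP}(G)\le\opt(G)$ since the basic LP is a $1/2$-approximation for $\mdcut$ (see the introduction and \cite{singer2025sketchingapproximationslpapproximations}); also $\opt(G)\ge m/4$ for $m:=|E|$, as a uniformly random assignment cuts a quarter of the edges in expectation. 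So it is enough to compute some $\widehat v$ with $|\widehat v-\tfrac12\mathsf{LP}(G)|\le\epsilon m$ and return $\widehat v-2\epsilon m$: this lies in $[(\tfrac12-O(\epsilon))\opt(G),\,\opt(G)]$, i.e.\ is a $(1/2-O(\epsilon))$-approximation, and $m$ is counted exactly in $O(\log n)$ space.

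Next I would show that on a bounded-degree graph $\tfrac1m\mathsf{LP}(G)$ is essentially a \emph{local} quantity. Fix a radius $r=r(d,\epsilon)$; since $G$ has maximum degree at most $d$ there are only $k=k(d,r)$ isomorphism types of the radius-$r$ neighborhood of an edge (as a directed graph with a distinguished oriented edge), and I claim $\tfrac1m\mathsf{LP}(G)$ is determined up to $\pm\epsilon$ by the empirical distribution $\mu$ of these types over $E$, through an explicit offline function $F$. The reason: an optimal $x$ may be rounded to values in a set of size $O(1/\epsilon)$ at additive cost $\le\epsilon m$, and then, because $x\mapsto\sum_e\min\{x_u,1-x_v\}$ is concave (a sum of minima of affine functions) and so the relaxation has no ``frustration'', a patching argument produces a near-optimal assignment that is \emph{$r$-local} — $x_v$ a fixed function of the radius-$r$ ball of $v$ — whose value then depends only on $\mu$. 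This is the bounded-degree ``local LP'' phenomenon behind Yoshida's constant-time estimators \cite{Yos11} and the multi-pass LP-solving results recalled in the introduction; the novelty is realising it in a single pass rather than in the $r$ passes that a layer-by-layer exploration would need.

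It remains to estimate $\mu$ in one pass. Since $\mu$ is supported on $k=O(1)$ types, learning it to within total variation $\epsilon$ requires only the types of $O(k/\epsilon^2)=O(1)$ random samples — but one pass allows no adaptive exploration of a neighborhood. Instead, before reading the stream, draw a uniformly random set $S\subseteq V$ of size $n^{1-\delta}$ for a small constant $\delta=\delta(d,r)>0$, and during the pass store every edge incident to $S$, at a cost of $O(|S|\,d\log n)=O(n^{1-\delta}\log n)=O(n^{1-\Omega(1)})$ bits. After the pass, for every edge whose entire radius-$r$ neighborhood (of size at most some $D=D(d,r)$) is contained in $S$ the stored edges determine its type exactly; a second-moment argument shows that $n^{\Omega(1)}$ edges have this property with high probability provided $\delta<1/D$, vastly more than the $O(1)$ samples we need. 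Since an edge whose neighborhood has size $b$ is captured with probability $\approx(|S|/n)^{b}$, reweighting the captured counts of each type by the corresponding inverse probability removes this size-dependent bias, and the same bound $\delta<1/D$ keeps the estimator's variance $o(1)$; this yields an $\epsilon$-accurate $\widehat\mu$ with high probability once $n$ exceeds a threshold depending on $(d,r,\epsilon)$ (for smaller $n$ one simply stores all of $G$ in $O(1)$ space). Output $\tfrac12\,m\,F(\widehat\mu)-2\epsilon m$.

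The main obstacle is this single-pass harvesting of $\mu$. A single pass precludes the natural $r$-pass breadth-first search from random seeds, forcing the ``store every edge around a random vertex set'' device, which recovers the radius-$r$ neighborhoods of only a biased random subset of the edges; the delicate point is to simultaneously guarantee that enough neighborhoods fall entirely inside $S$ and that the size-dependent inclusion bias can be undone without inflating the variance — the tension between these two requirements is precisely what fixes $\delta$, hence the sublinear exponent, as a function of $d$ and $\epsilon$. Bounded degree is used essentially here, both to bound the neighborhood sizes and to bound the number of $S$-incident edges, which is exactly why the argument does not reach unbounded degree, where one must instead fall back on the multi-pass route.
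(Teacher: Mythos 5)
Your third step --- sample a vertex set $S$ of density $n^{-\delta}$, store the edges around it, keep only edges whose full radius-$r$ ball lands inside $S$, and reweight each captured type by the inverse inclusion probability (which depends on the number of vertices in the ball) --- is exactly the mechanism the paper formalizes in \cref{main_thm_SSSV}, down to the second-moment/Chebyshev argument and the constraint that $\delta$ be small relative to the maximum ball size. The genuine gap is in your second step, the claim that $\tfrac1m\mathsf{LP}(G)$ is determined up to $\pm\epsilon$ by the radius-$r$ edge-type distribution because ``the objective is concave\dots a patching argument produces a near-optimal $r$-local assignment.'' Concavity of $x\mapsto\sum_{(u,v)\in E}\min\{x_u,1-x_v\}$ over the box $[0,1]^V$ does not by itself yield locality: it tells you nothing about how far a perturbation of one coordinate of an optimal solution propagates, and ``no frustration'' is not a property concave maximization over a product domain enjoys in any sense that produces an assignment $x_v = g(\text{ball}_r(v))$ for a \emph{single} function $g$ valid across the whole graph. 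Locality of near-optimal LP solutions on bounded-degree instances is itself a substantial theorem (it is essentially Yoshida's result \cite{Yos11}, proved by simulating a constant-round distributed LP solver, and is what \cite{fei2025dichotomytheoremmultipassstreaming} invokes in the multi-pass setting); as written, your central lemma is asserted rather than proved, and the sketch offered for it would not survive scrutiny.

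This is precisely the point where the paper (following \cite{DBLP:conf/soda/SaxenaS0V25}) takes a different and self-contained route: instead of arguing that the LP optimum is a local quantity, it starts from a concrete $(1/\epsilon)$-round LOCAL algorithm for $\mdcut$ (\cref{corollary_SSSV}, based on \cite{DBLP:conf/algosensors/Censor-HillelLS17}) whose per-vertex output $\rho(v)$ is \emph{by construction} a function of the radius-$\ell$ ball and $O(\log(1/\epsilon))$-bit pairwise-independent labels; the quantity $\dcut(G,\rho)=\frac1m\sum_{u\to v}\rho(u)(1-\rho(v))$ is then automatically a function $\Exp_{T\sim\mu}[\Local(T)]$ of the edge-type distribution, and the $(1/2-\epsilon)$ guarantee is inherited from the distributed algorithm rather than from the LP integrality gap. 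If you want to keep your LP-based framing, you must either import Yoshida's theorem as a black box or replace the concavity/patching heuristic with an actual local iterative scheme and a convergence argument; otherwise the proof does not close.
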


\begin{theorem}[\cite{DBLP:conf/stoc/Trevisan01}]\label{thm:Tre}
    For any directed graph $G(V,E)$ on $n$ vertices and $m$ edges, and $0<\epsilon\le 1$, there exists a (random) directed graph $\overline{G}(\overline{V},\overline{E})$ on $2m$ vertices and $O(m/\epsilon^2)$ edges such that with probability at least $5/6$, $\mdcut(G) - \epsilon' \le \mdcut(\overline{G}) \le \mdcut(G) + \epsilon$, and the maximum degree of a vertex in $\overline{G}$ is at most $O(1/\epsilon^2)$.
\end{theorem}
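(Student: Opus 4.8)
The plan is to prove this via Trevisan's vertex blow-up. First I would build $\overline G$ as follows: replace each vertex $v$ of $G$ by a \emph{cloud} $C_v$ containing one copy $v_e$ for every edge $e$ incident to $v$, so that $|\overline V|=\sum_v\deg(v)=2m$; carry each original arc $e=(u,v)$ over to the single arc $u_e\to v_e$ between the two clouds (so these ``edge arcs'' form a perfect-matching-like structure on $\overline V$, with each copy incident to exactly one of them); and inside each cloud $C_v$ install a \emph{consistency gadget} --- the arcs of a bounded-degree expander on the vertex set $C_v$, included with multiplicity $\Theta(1/\epsilon^2)$ and configured so that, viewed as a function of the labeling of $C_v$, the gadget's directed-cut contribution degrades in proportion to how far the cloud is from encoding a single $\{L,R\}$-label for $v$. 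By construction $\overline G$ then has maximum degree $1+\Theta(1/\epsilon^2)=O(1/\epsilon^2)$ and $m+\sum_v\Theta(\deg(v)/\epsilon^2)=O(m/\epsilon^2)$ arcs, so the size and degree bounds come for free.

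Next I would establish the two value inequalities separately. For the lower bound $\mdcut(\overline G)\ge\mdcut(G)-\epsilon'$: take an optimal bipartition $(L,R)$ of $G$ and lift it to $\overline V$, assigning to each cloud $C_v$ the gadget's most favorable labeling among those encoding $v$'s label; then verify that (i) this recovers essentially all of the $\mdcut(G)\cdot m$ arcs that were directed from $L$ to $R$ in $G$, since each edge arc's two endpoints inherit the labels of their original vertices, and (ii) the gadget arcs are cut as favorably as the gadget permits. A short edge-count computation then shows the normalized directed-cut value drops by only $O(\epsilon)$. For the upper bound $\mdcut(\overline G)\le\mdcut(G)+\epsilon$: given an arbitrary bipartition $\sigma$ of $\overline V$, ``decode'' it to a bipartition of $G$ by giving each vertex $v$ the $\{L,R\}$-label that $C_v$'s gadget is closest to encoding (a plurality rule), and bound the resulting change in directed-cut value. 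The expander mixing lemma is the workhorse here: if $C_v$ is $\delta$-far from a canonical encoding then $\sigma$ restricted to $C_v$ is $\Omega(\delta)$-far from an optimal cut of $C_v$'s expander, and therefore already forfeits $\Omega(\lambda\,\delta)$ times the number of gadget arcs inside $C_v$; summing over clouds, any $\sigma$ within $\epsilon$ of $\mdcut(\overline G)$ must keep almost all clouds nearly canonical, so the decoded bipartition of $G$ agrees with $\sigma$ on all but an $O(\epsilon)$ fraction of the edge arcs, which gives the bound.

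The degree and size bounds are immediate from the construction, and the randomness --- and the probability $5/6$ --- enter only through the within-cloud gadgets: I would take the gadget on a cloud of more than a threshold ($\Theta(\log n)$) size to be a random bounded-degree graph, which is a good spectral expander except with probability exponentially small in the cloud size, and the gadget on each smaller cloud to be an explicit fixed expander (random regular graphs on few vertices need not concentrate). Taking the expander degree a sufficiently large multiple of $1/\epsilon^2$ makes its relative spectral gap $O(\epsilon)$, which is what both value inequalities require, and a union bound over the at most $n$ randomly built clouds then gives overall success probability at least $5/6$.

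I expect the crux to be the calibration of the gadget. Because the gadget arcs outnumber the edge arcs by a factor $\Theta(1/\epsilon^2)$, the gadget must be strong enough --- enough multiplicity, good enough expansion --- that \emph{every} near-optimal labeling of $\overline G$ is forced nearly canonical on \emph{every} cloud (otherwise the decoded bipartition of $G$ in the soundness step could be badly off), and yet the gadget's own contribution to $\mdcut(\overline G)$, in both the optimal lift and the worst-case labeling, must be pinned down tightly enough that this dominant block of arcs perturbs the normalized value by only $O(\epsilon)$. Reconciling these two demands is precisely what forces the expander's relative spectral gap to be $O(\epsilon)$, hence its degree to be $\Theta(1/\epsilon^2)$; carrying out the per-cloud expander-mixing estimate and summing it cleanly over all clouds --- while converting edge-arc losses into $\mdcut$-losses --- is the main technical work.
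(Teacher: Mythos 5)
There is a genuine gap, and it is in the heart of your construction rather than in the calibration you flag as the crux. A consistency gadget made of \emph{directed-cut} arcs cannot penalize inconsistency: an arc $x\to y$ inside a cloud $C_v$ contributes to the dicut only if $x\in L$ and $y\in R$, so a canonically labeled cloud (all copies on the same side) contributes \emph{zero} gadget arcs, while any split cloud contributes a positive number. No orientation or multiplicity of the expander can make its ``directed-cut contribution degrade in proportion to how far the cloud is from encoding a single label''--- the contribution is minimized exactly at the canonical labelings. Consequently your soundness step collapses: since gadget arcs outnumber edge arcs by $\Theta(1/\epsilon^2)$, an adversary who splits every cloud roughly in half collects $\Omega(1)$ of the gadget arcs and makes $\mdcut(\overline G)$ essentially independent of $\mdcut(G)$ (compare $G$ with $\mdcut(G)\approx 1/4$ to $G$ with $\mdcut(G)=1$: the two blow-ups would have nearly the same value). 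The same normalization problem already kills completeness, since even the intended lift changes the denominator by a $\Theta(1/\epsilon^2)$ factor while the cut edges of $G$ contribute only $m$ arcs. The expander-replacement paradigm works when gadget edges carry \emph{equality} constraints (as in PCP/Label Cover degree reduction); DICUT constraints cannot express such penalties, and this is precisely why Trevisan does not use gadgets.

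The paper's route (\cref{Alg:trevisan} and \cref{thm:trevisan_reduction}) avoids consistency entirely. It keeps the $2m$ copies but adds, for each edge $u\to v$ of $G$, $d=\Theta(1/\epsilon^2)$ independently sampled arcs between a uniformly random copy of $u$ and a uniformly random copy of $v$; there are no intra-cloud arcs. An \emph{arbitrary} bipartition $L\sqcup R$ of $\overline V$ is interpreted as the fractional assignment $\rho_{L\sqcup R}(v)=$ (fraction of copies of $v$ in $L$), and one uses that $\mdcut(G)=\sup_{\rho:V\to[0,1]}\dcut(G,\rho)$, so no decoding to an integral cut is needed. Hoeffding plus a union bound over all $2^{2m}$ bipartitions shows $\dcut(\overline G,L\sqcup R)\approx\dcut(G,\rho_{L\sqcup R})$ uniformly, and the degree bound follows from a Chernoff bound on each copy's degree (expectation $\approx d$) together with deleting the few arcs incident to copies of degree above $11d$, which Markov shows is an $O(\epsilon)$ fraction. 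If you want to salvage your write-up, you should discard the gadgets and adopt this fractional-assignment viewpoint; it is also what makes the reduction implementable in the streaming setting later in the paper.
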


\begin{theorem}[\cite{DBLP:conf/icalp/BhaskaraDV18}]\label{thm:aditya}
For every $\epsilon, \delta\in (0,1]$, there is a two-pass streaming algorithm that achieves $(1-\epsilon)$-approximation for $\mdcut$ on any $n$-vertex graph of average degree at least $n^\delta$, using space $\tilde{O}(n^{1-\delta})$.
\end{theorem}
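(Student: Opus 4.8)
The plan is to implement, in the streaming model, the exhaustive-sampling paradigm for dense constraint satisfaction, specialized to the ``semi-dense'' regime $\bar d := 2m/n \ge n^\delta$. Set $s := n^{1-\delta}\cdot\mathrm{poly}(1/\epsilon)\cdot\mathrm{polylog}(n)$, which is within the space budget. First, it suffices to estimate $\mdcut(G)$ to additive error $O(\epsilon m)$: a uniformly random assignment cuts each directed edge with probability $1/4$, so $\mdcut(G)\ge m/4$, and the edge count $m$ is obtained from a single counter.

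Use the first pass as a \emph{learning} pass: sample a uniformly random vertex set $S$ with $|S|=s$ and store the induced subgraph $G[S]$; store the exact out- and in-degrees of every $v\in S$; reservoir-sample $\tilde O(n^{1-\delta})$ edges of the stream; and run an $\ell_1$-heavy-hitters sketch on the degree stream to recover the set $H$ of vertices of degree at least $2m/s$ (so $|H|\le s$). All of this fits in $\tilde O(n^{1-\delta})$ space. Between the passes, compute offline a near-optimal assignment $\sigma$ of the stored graph restricted to $H\cup S$; this is our guess for how a fixed optimal assignment $x^\star$ of $G$ behaves on $H\cup S$. It determines a candidate assignment $x_\sigma$ of all of $V$ by the \emph{greedy} rule: each $v\notin H\cup S$ is placed on the side maximizing its estimated contribution, where the estimate counts $v$'s edges into $H\cup S$ of the relevant orientation, rescaled by $\deg(v)/|H\cup S|$. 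The second pass is an \emph{evaluation} pass: compute exactly the contribution of the edges meeting $H\cup S$ under $x_\sigma$, and compute the adjacencies into $H\cup S$ of the endpoints of the reservoir-sampled edges (so that $x_\sigma$ can be evaluated at those endpoints without ever storing an $n$-bit assignment), thereby obtaining an unbiased estimate of the contribution of the edges internal to $V\setminus(H\cup S)$; output the resulting total, maximized over a few independent repetitions of $S$ and over the natural candidates for $\sigma$.

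Correctness reduces to a sampling lemma: with high probability over $S$, the candidate built from $\sigma=x^\star|_{H\cup S}$ cuts at least $\mdcut(G)-O(\epsilon m)$ edges of $G$. For edges meeting $H\cup S$ this is immediate, since the greedy rule is, vertex by vertex, a best response to the true assignment of $H\cup S$ and so does at least as well as $x^\star$ there. The crux --- and the step that genuinely uses $\bar d\ge n^\delta$ (equivalently $m\ge n^{1+\delta}/2$) and that seems to need a second pass rather than one --- is bounding the loss on the edges internal to $V\setminus(H\cup S)$. Unlike the fully dense case $m=\Theta(n^2)$, here these edges may carry essentially all the mass, so the naive ``write them all off'' bound of $O(m)$ is useless, and the naive hybrid/telescoping bound is no better, since its drift terms sum to $m$. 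The way through is to choose the threshold for $H$ and the size $s$ so that every vertex's adjacency profile into $H\cup S$ is known to within an $\epsilon$ fraction of that vertex's own degree --- this is what pins $s$ to $\tilde\Theta(n^{1-\delta}\cdot\mathrm{poly}(1/\epsilon))$ and is where the density hypothesis enters --- and then to show that the resulting per-vertex greedy errors, which total $O\bigl(\epsilon\sum_v\deg(v)\bigr)=O(\epsilon m)$, aggregate into an $O(\epsilon m)$ global loss rather than compounding. Establishing this aggregation bound, together with the concentration estimates that make the second-pass evaluation correct in $\tilde O(n^{1-\delta})$ space, is the technical heart of the argument, and is where I would expect essentially all of the work to go.
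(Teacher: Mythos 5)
This theorem is not proved in the paper at all: it is imported verbatim from Bhaskara--Daruki--Venkatasubramanian (ICALP 2018), and the paper's only discussion of it is the high-level description in the technical overview. That description also reveals that your route differs from theirs: BDV construct a sublinear-size \emph{core-set} --- a small weighted subgraph whose cut value is within additive $\epsilon m$ of $G$'s for \emph{every} bipartition simultaneously --- via non-uniform (degree-based) sampling, using the first pass to obtain degree information and the second pass to extract the reweighted sample, after which $\mdcut$ is solved offline on the core-set. Your proposal instead follows a learning-pass/evaluation-pass scheme with a single-shot greedy extension of an assignment guessed on $H\cup S$. The core-set route has the advantage that the quantifier over assignments is handled once and for all by a union bound over the (reweighted) sampled object, so no greedy-aggregation argument is needed.

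This matters because your sketch has a genuine gap exactly where you flag it. A one-shot greedy rule in which every $v\notin H\cup S$ best-responds only to the assignment of $H\cup S$ does not control the edges internal to $V\setminus(H\cup S)$: two adjacent vertices there can both best-respond onto the same side and destroy the edge between them, and in the regime $\bar d = n^{\delta}$ with $|H\cup S|=\tilde O(n^{1-\delta})$ these internal edges can carry a constant fraction of $m$. The per-vertex guarantee ``$v$'s placement is within $\epsilon\deg(v)$ of optimal with respect to $H\cup S$'' says nothing about the internal edges, so the claimed aggregation to an $O(\epsilon m)$ global loss is not a concentration issue but a structural one, and it is false for the algorithm as described. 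The standard repair (Arora--Karger--Karpinski / Mathieu--Schudy style) is to process $V\setminus(H\cup S)$ in $O(1/\epsilon)$ rounds, each round best-responding to the union of the previously assigned rounds and a fresh sample of the not-yet-assigned vertices labeled by $x^\star$, so that a telescoping argument charges each internal edge to the round in which its later endpoint is fixed; your proposal explicitly declines this hybrid argument and substitutes an unproven assertion. Either adopt the iterated scheme (and redo the second-pass evaluation accordingly, since the assignment then depends on multi-round estimates you must be able to recompute) or switch to the core-set construction that the cited theorem actually uses.
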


At a high level, our idea is to reduce the input graph which could potentially have an unbounded maximum degree to a bounded-degree graph using Trevisan's reduction \cite{DBLP:conf/stoc/Trevisan01}, and then execute the \cite{DBLP:conf/soda/SaxenaS0V25} algorithm for bounded-degree graphs. However, there are challenges to this naive approach. Firstly, Trevisan's reduction is not a small-space reduction and requires the knowledge of the degree of every vertex in the graph. This would be too expensive to store in the streaming setting. Secondly, the reduction produces a graph with $2m$ vertices, which could be $\Theta(n^2)$ in the worst case, hence the \cite{DBLP:conf/soda/SaxenaS0V25} algorithm could require $\Omega(n)$ space. We deal with the second challenge by running the \cite{DBLP:conf/soda/SaxenaS0V25} algorithm only when the input graph is sufficiently sparse, i.e., has at most $n^{1+\delta}$ edges, for some suitably chosen $\delta$ so that the space usage is still at most $O(n^{1-\Omega(1)})$. And when the input instance is dense, we run the \cite{DBLP:conf/icalp/BhaskaraDV18} algorithm. Thus, the space usage is always at most $O(n^{1-\Omega(1)})$. To deal with the first challenge, we implement the \cite{DBLP:conf/soda/SaxenaS0V25} algorithm in a non black-box manner. We do not produce the entire bounded-degree graph from the Trevisan reduction and instead only sample parts of this graph that are required for the \cite{DBLP:conf/soda/SaxenaS0V25} algorithm. Below, we give a high-level overview of both these algorithms and our implementation of them.

\paragraph*{Trevisan reduction \cite{DBLP:conf/stoc/Trevisan01}.} The reduction is quite simple. Say $G(V,E)$ is an unweighted directed graph on $n$ vertices and $m$ edges, with possible multi-edges but no self-loops. Given $G$, Trevisan first creates a weighted graph $G'(V',E')$: for every vertex $v\in V$, he adds $\deg(v)$ many ``copies'' of it, labeled $(v,0),\dots,(v,\deg(v)-1)$, to $V'$. For every edge $u\rightarrow v\in E$, and $i\in [\deg(u)], j\in [\deg(v)]$,\footnote{For a natural number $n\in \mathbb{N}$, we define $[n] = \{0,\dots,n-1\}$. Note that this is slightly different from the standard definition.} he adds an edge $(u,i)\rightarrow (v,j)$ of weight $\frac{1}{\deg(u)\deg(v)}$ to $E'$. It is not hard to see that the $\mdcut$ values of $G$ and $G'$ are equal. He then samples a random graph $\tilde{G}$ based on $G'$ in the following way. The vertex set of $\tilde{G}$ is $V'$ and he samples the edge set as follows: he sets a parameter $d=O(1/\epsilon^2)$ and samples $dm$ edges independently (with repetition) from $G'$, where in each sample, an edge in $E'$ is sampled with probability proportional to its weight. He finally deletes a minimal set of edges from $\tilde{G}$ so that the degree of every vertex is at most $O(d)$. He argues that with the high probability, at most an $\epsilon$ fraction of the edges are deleted, and that the $\mdcut$ value is preserved up to an additive $\epsilon$ factor.

\paragraph*{\cite{DBLP:conf/soda/SaxenaS0V25} algorithm.} Saxena, Singer, Sudan, and Velusamy make a key observation that on bounded-degree graphs, in order to simulate a distributed algorithm for $\mdcut$, it suffices to estimate certain neighborhood distributions of the input graph (see \cref{SSSV_nhbd_dist} for more details). They show that such distributions can be estimated from the induced subgraph of a randomly sampled subset of the vertex set. Finally, based on a previous distributed algorithm of Censor-Hiller, Levy, and Shachnai \cite{DBLP:conf/algosensors/Censor-HillelLS17}, they give a distributed algorithm that achieves $(1/2-\epsilon)$-approximation for $\mdcut$, and efficiently simulate it in sublinear space in the single-pass setting using the above techniques.

\paragraph*{Putting them together (a three-pass algorithm).} Since the \cite{DBLP:conf/soda/SaxenaS0V25} algorithm only requires the induced subgraph of a randomly sampled subset of the vertex set, we do not need to do the Trevisan reduction on the entire input graph. Instead, we directly sample a random subset of vertices from the bounded-degree graph, i.e., from $V' = \{(v,i): v\in V; i\in [\deg(v)]\}$, and perform a \emph{partial} Trevisan reduction on their \emph{parent} vertices. To sample from $V'$, we first map every edge $e\in E$, based on its position in the stream, to two unique vertices in $V'$, i.e., no two edges map to the same vertex. In particular, an edge $u\rightarrow v$ is mapped to $(u,i)$ and $(v,j)$ iff it is the $i$-th edge (resp. $j$-th edge) incident on $u$ (resp. $v$) in the stream. Thus, as the edges arrive in the stream, we can sample from $V'$.\footnote{There is a subtlety here. We do not have enough space to remember the ordering of the edges incident on every vertex in the input graph, which is crucial in the above mapping. Instead, we label the sampled copies of a vertex based on the order in which they are sampled by our algorithm, and argue that this relabeling does not affect the correctness of the algorithm (see \cref{sec:algorithms} for more details).} Once the vertices are sampled, we have to sample the induced subgraph on these vertices. The original Trevisan reduction performs a \emph{global} sampling of the edges, where every edge is sampled with probability proportional to its weight. In order to perform such a sampling, one would have compute the weights on all the edges, which is infeasible in the streaming setting. We therefore modify the reduction to a \emph{local} sampling procedure, where for every edge $u\rightarrow v\in E$, we create $d=O(1/\epsilon^2)$ ``copies'' of it in $E'$ and for every copy of the edge, we independently sample a uniform random copy of $u$ and a uniform random copy of $v$ in $V'$ as its endpoints.\footnote{A slightly different reduction, also amenable to streaming, was recently proposed by Fei, Minzer, and Wang in \cite{fei2025dichotomytheoremmultipassstreaming}.}We also delete \emph{all} the edges incident on high-degree vertices and not just the minimal set of edges, as it is not clear how to compute the minimal set without generating the entire graph. We argue in \cref{sec:modified_trevisan} that this procedure still preserves the $\mdcut$ value up to an additive $\epsilon$ factor with high probability. Now that we have this local procedure, we generate the induced subgraph on the sampled vertices as follows. After we sample the vertices in the first pass, we store the degrees of their parent vertices in the second pass. In the third pass, we again scan through the edges in the stream and if $u\rightarrow v$ is such that both $u$ and $v$ have at least one of their copies sampled, then we sample $d$ independent copies of the edge as outlined above (we can do this since we know the degrees of $u$ and $v$), and we store those edges that are in the induced subgraph. If neither $u$ nor $v$ have any copies in the sampled set of vertices, then we do nothing. If only one of them, say $u$, has sampled copies, then we sample only the endpoints corresponding to $u$'s copies for the $d$ copies of the edge. We do this to keep track of the degrees of all the sampled vertices, and in the end, delete all the edges incident on the high-degree vertices. The formal description of the procedure and the proof of correctness are described in \cref{sec:algorithms}.

\paragraph*{Modifications to obtain a two-pass algorithm.} In \cref{sec:twopass}, we modify the above algorithm and get a two-pass algorithm by getting rid of the second pass which computes and stores the degrees of the parent vertices \emph{exactly}. In particular, we show that our modified Trevisan reduction in \cref{sec:modified_trevisan} works even when we have only \emph{approximate} estimates for the degrees of high-degree vertices. In the first pass, along with sampling vertices like in the above algorithm, we estimate the degrees of high-degree vertices using standard sampling methods. In particular, for every vertex, we sample each edge incident to it independently with probability $p=n^{-O_{\epsilon}(1)}$ and rescale the total number of sampled edges by $1/p$ to estimate its degree. This procedure would estimate the degrees accurately (up to a multiplicative error) for vertices of sufficiently-large polynomial degree, say $n^c$. We can ensure that the space is still sublinear because we run the Trevisan procedure only on graphs with bounded average degree.
We then use these estimates to run edge sampling as in the third pass of the above algorithm. However, the copies of the edges incident on low-degree vertices will not be sampled correctly in this step. To fix this, for every vertex that is sampled in the first pass, we store the first $n^c$ edges incident on its \emph{parent} vertex in the second pass.\footnote{We need to be a bit careful when defining the degree thresholds and the sampling probabilities, but they can be set in a way to keep the space sublinear.} Thus, we can identify parent vertices of degree less than $n^c$, and also compute their degrees \emph{exactly}. In post-processing, we delete the edges incident to the copies of such vertices, and resample them according to their correct distributions (since we now have the exact degrees of the low-degree parent vertices).

\paragraph*{\cite{DBLP:conf/icalp/BhaskaraDV18} algorithm for $\mcut$.} Finally, we give a high-level description of the two-pass algorithm for dense instances of $\mcut$ in \cite{DBLP:conf/icalp/BhaskaraDV18}. While their paper explicitly states the result only for $\mcut$, their algorithm immediately extends to $\mdcut$ as well. Their main idea is to use certain non-uniform sampling methods to construct a ``core-set'' of \emph{sublinear} size for the input graph, which preserves the $\mcut$ value up to an additive error $\epsilon$. A core-set of linear size can be obtained for any graph by uniformly sampling $\tilde{O}(n/\epsilon^2)$ edges \cite{AG09,DBLP:conf/stoc/Trevisan09}. And for very dense graphs of average degree at least $\Omega(n)$, constant-size core-sets can be constructed \cite{DBLP:journals/jacm/GoldreichGR98,DBLP:journals/jcss/AlonVKK03,DBLP:conf/soda/MathieuS08}. In the polynomial density regime (i.e., average degree $n^\delta$ for $\delta<1$), previous works \cite{DBLP:journals/rsa/FeigeS02,DBLP:conf/soda/BarakHHS11} show that if both the maximum and the minimum degrees are $\Theta(n^\delta)$, then a core-set of size $\tilde{O}(n^{1-\delta})$ exists. Bhaskara, Dharuki, and Venkatasubramanian \cite{DBLP:conf/icalp/BhaskaraDV18} extend this result to general graphs in this regime, and give a two-pass streaming algorithm to construct a core-set of size $\tilde{O}(n^{1-\delta})$.

\section*{Bibliographic note}
In a concurrent and independent work~\cite{azarmehretal}, Azarmehr, Behnezhad, Ferrante, and Sanneian present a single-pass streaming algorithm that achieves a $(1/2-\epsilon)$-approximation using $O(n^{1-\Omega_\epsilon(1)})$ space. While their result is qualitatively stronger, our techniques appear to extend more naturally to arbitrary CSPs. While our approach relies only on a \emph{black-box} use of distributed algorithms for $\mdcut$, their algorithm is closely tailored to the distributed algorithm of Censor-Hillel, Levy, and Shachnai for $\mdcut$~\cite{DBLP:conf/algosensors/Censor-HillelLS17}. The latter crucially exploits the submodularity of $\mdcut$ and does not extend even to $\textsf{Max-$2$-AND}$.

More generally, Yoshida~\cite{Yos11} gives distributed algorithms for arbitrary CSPs based on packing LP formulations~\cite{KMW,li_et_al:LIPIcs.ISAAC.2024.45}. These algorithms imply single-pass sublinear-space streaming algorithms for bounded-degree CSP instances~\cite{singer2025sketchingapproximationslpapproximations}. Combining these results with the techniques developed in this work yields two-pass streaming algorithms for sparse CSP instances with polynomially bounded average degree.
This suggests the following natural open problem: extend the core-set construction of~\cite{DBLP:conf/icalp/BhaskaraDV18} to polynomially dense instances of general CSPs. Such a result would yield two-pass sublinear-space streaming algorithms for \emph{all} CSPs.

\section{Preliminaries}
In this paper, any reference to a directed graph $G$ means unweighted graphs with multi-edges but no self-loops. For a directed graph $G(V,E)$, a dicut is an ordered bipartition of the vertex set $V$. An edge $u\rightarrow v$ is said to belong to a dicut $L\sqcup R$ if $u\in L$ and $v\in R$. The value of the dicut, denoted by $\dcut(G, L\sqcup R)$ is the fraction of edges in $G$ that belong to the dicut $L\sqcup R$. We define $\mdcut(G)$ to be $\max_{L\sqcup R} \dcut(G, L\sqcup R)$.
For any function $\rho:V\rightarrow[0,1]$, we define the expected dicut value of $G$ under the rounding $\rho$, denoted by $\dcut(G,\rho)$, to be $\Exp[\dcut(G, L\sqcup R)]$, where every vertex in $v$ belongs to $L$ independently with probability $\rho(v)$.

For any function $f:A\rightarrow B$ that maps a domain $A$ to a co-domain $B$, and $C\subseteq A$, we denote the restriction of the function to the domain $C$ by $f|_C : C\rightarrow B$.

\subsection{Neighborhood distributions and the simulation of local algorithms}\label{SSSV_nhbd_dist}
A key contribution of Saxena, Singer, Sudan, and Velusamy in \cite{DBLP:conf/soda/SaxenaS0V25} is the identification of a connection between streaming and local algorithms, where they show that it is possible to estimate local neighborhood distributions of graphs in the streaming setting and use these distributions to simulate constant-round distributed algorithms. We recall some of the notations introduced in their paper and reprove their main theorem that shows that sampling vertices uniformly at random and estimating local neighborhood distributions from the induced subgraph suffices to estimate the value of a local algorithm that is executed on the original graph.

First, we formally define the LOCAL model for distributed algorithms for the $\mdcut$ problem and state the result of \cite{DBLP:conf/algosensors/Censor-HillelLS17, DBLP:conf/soda/SaxenaS0V25}.
\begin{definition}[The LOCAL model for distributed algorithms for $\mdcut$]
Let $G(V,E)$ be a directed graph on $n$ vertices and $m$ edges. $G$ describes the topology of a communication network, where each node can communicate with both its incoming and outgoing neighbors. Communication happens in synchronous rounds, where in every round, every node does some internal computations and sends an arbitrarily long message to each of its neighbors. At the end, each node $v$ computes a fractional assignment $\rho(v)\in [0,1]$ and the approximation ratio achieved by the algorithm is defined to be $\dcut(G,\rho)/ \mdcut(G)$.  In randomized algorithms, the nodes also have access to an arbitrarily long private random bit string that is sampled prior to any communication between the nodes.
\end{definition}

\begin{theorem}[\cite{DBLP:conf/algosensors/Censor-HillelLS17,DBLP:conf/soda/SaxenaS0V25}]
For every $\epsilon>0$, there exists a distributed algorithm for $\mdcut$ in the LOCAL model that achieves $(1/2-\epsilon)$-approximation with probability at least $9/10$, and requires at most $1/\epsilon$ rounds of communication. It is a randomized algorithm where every node uses only an $O(\log(1/\epsilon))$-bit private random string, and the random strings can be sampled from a pairwise independent hash family.
\end{theorem}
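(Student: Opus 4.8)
The plan is to obtain the distributed algorithm by solving and rounding the \emph{basic LP relaxation} of $\mdcut$ within each vertex's local neighborhood. The basic LP assigns to each vertex $v$ a value $y_v\in[0,1]$ (relaxing ``$\Pr[v\in L]$'') and has value $\mathrm{LP}(G) := \max_{y\in[0,1]^V}\frac1m\sum_{u\to v}\min(y_u,1-y_v)$; it is a relaxation, so $\mathrm{LP}(G)\ge\mdcut(G)$, and its integrality gap is exactly $2$ (the complete bidirected graph being extremal), so any assignment of LP-value $\ge\mathrm{LP}(G)-\epsilon$ yields, after a factor-$2$ rounding loss, a $(1/2-O(\epsilon))$-approximation. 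Thus it suffices to: (i) have each vertex compute, from its $O(1/\epsilon)$-radius ball, a number $p_v\in[0,1]$ such that the global assignment $p$ is LP-feasible with LP-value within $\epsilon$ of $\mathrm{LP}(G)$; and (ii) transform $p$ into the output assignment $\rho$ so that the measured quantity $\dcut(G,\rho)=\frac1m\sum_{u\to v}\rho(u)(1-\rho(v))$ is at least $\tfrac12$ of that LP-value, up to $O(\epsilon)$. Both steps are ``$2$-local'', in the sense that every relevant quantity is a sum of terms each depending on only two vertices, which is what ultimately keeps the randomness small.

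For step (i) I would run a local, iterative LP-solving scheme --- a multiplicative-weights / local-ratio update as in Yoshida's constant-time CSP algorithms, or the more direct bias-propagation scheme of Censor-Hillel--Levy--Shachnai --- in which round $t$ updates $y_v$ as a function of the current $y$-values on $v$'s in- and out-neighbors, so that $T=\Theta(1/\epsilon)$ rounds propagate information from radius $T$ and certify that the resulting assignment is feasible and $\epsilon$-optimal for the LP. High-degree ``hub'' vertices need the usual truncation (dropping a negligible $O(\epsilon)$ fraction of the objective, exactly as in the reduction to bounded degree used elsewhere in the paper), after which the remaining graph is locally tame and the iteration stabilizes within $O(1/\epsilon)$ rounds. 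The scheme is randomized (it explores random local structure, resp.\ uses a random threshold), but each node only needs $O(\log(1/\epsilon))$ random bits and its guarantees are governed by $2$-variable events, so the random strings may be drawn from a pairwise-independent family; a Chebyshev/Markov argument (or $O(1)$ independent restarts) boosts the success probability to $9/10$.

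For step (ii) I would first discretize each $p_v$ to the nearest multiple of $\epsilon$, which costs only $O(\epsilon)$ in the objective and needs only $O(\log(1/\epsilon))$ bits to name, and then apply the $\mdcut$-specific local transformation of the LP solution into per-vertex marginals $\rho(v)$ with $\frac1m\sum_{u\to v}\rho(u)(1-\rho(v))\ge\tfrac12\cdot\frac1m\sum_{u\to v}\min(p_u,1-p_v)$. Since each edge's contribution $\rho(u)(1-\rho(v))$ depends only on the pair $(p_u,p_v)$ --- equivalently, if an integral $\{0,1\}$-output is desired, only on the pair of private strings $(h(u),h(v))$ for a hash $h$ drawn from a pairwise-independent family into $\{0,1\}^{O(\log(1/\epsilon))}$ --- pairwise independence suffices to evaluate $\Exp_h[\dcut(G,\rho)]$, and it equals the intended LP-derived value. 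Combining (i) and (ii) and rescaling $\epsilon$ by a constant gives a $(1/2-\epsilon)$-approximation with probability $\ge 9/10$, in $1/\epsilon$ rounds, using $O(\log(1/\epsilon))$-bit pairwise-independent private strings.

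The step I expect to be the main obstacle is making (i) rigorous with the \emph{tight} round count $1/\epsilon$: one must show that an $O(1/\epsilon)$-radius local view really does determine an LP assignment that is simultaneously globally feasible and within additive $\epsilon$ of $\mathrm{LP}(G)$ --- i.e.\ that truncating the iterative solver after $\Theta(1/\epsilon)$ rounds and gluing together the per-vertex local computations introduces neither infeasibility nor more than $\epsilon$ loss --- and controlling how the hub-truncation interacts with this. A secondary subtlety, and where $\mdcut$-specific structure (rather than a generic CSP argument) is genuinely needed, is in step (ii): the LOCAL model scores a fractional $\rho$ by the \emph{product-form} quantity $\sum_{u\to v}\rho(u)(1-\rho(v))$, and simply outputting $\rho(v)=p_v$ can lose far more than a factor $2$ edge-by-edge (e.g.\ when $p_u$ is small and $p_v$ is large on an edge $u\to v$), so the rounding rule must convert the LP marginals into genuinely different per-vertex marginals while remaining computable from the local ball and losing only the factor $2$ in aggregate.
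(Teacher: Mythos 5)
The paper does not actually prove this theorem: it is imported wholesale from Censor-Hillel--Levy--Shachnai and Saxena--Singer--Sudan--Velusamy, so there is no internal proof to compare yours against. Judged on its own terms, your proposal is a plausible plan rather than a proof, and the two places you yourself flag as obstacles are exactly where all the content lies. Step (i) --- that each vertex can compute from its radius-$O(1/\epsilon)$ ball a value $p_v$ such that the glued-together assignment is globally LP-feasible and within additive $\epsilon$ of the LP optimum --- is asserted, not established; this is essentially the whole theorem in disguise. The known local/streaming LP solvers for CSPs (Yoshida-style, and the Fei--Minzer--Wang result this paper cites) require $O(1/\epsilon^2)$ rounds and a genuinely involved analysis, and nothing in your sketch explains why $1/\epsilon$ rounds of ``bias propagation'' suffice, why the truncated iteration remains feasible when the per-vertex computations are glued together, or how the hub truncation interacts with feasibility. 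Likewise, the claim that pairwise independence of the private strings suffices does not follow from ``the guarantees are governed by 2-variable events'': in any iterative local scheme the final value at a vertex depends on the random strings of \emph{all} vertices in its ball, not just two, so reducing the required independence to pairwise is itself a lemma that must be proven for the specific algorithm (in the cited works it comes out of an explicit expectation/variance computation, not a generic locality principle).

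It is also worth knowing that the algorithm the theorem refers to is not an LP solver at all: Censor-Hillel et al.\ and Saxena et al.\ analyze a local simulation of a randomized greedy in which each vertex draws an $O(\log(1/\epsilon))$-bit random priority and, in the round indexed by that priority, commits to a side based on its already-committed neighbors; the $(1/2-\epsilon)$ bound is a direct combinatorial analysis of this greedy, and the $1/\epsilon$ round count and the $O(\log(1/\epsilon))$-bit strings fall out of discretizing the priorities into $1/\epsilon$ levels. So your route is genuinely different from the cited one, not merely a rephrasing. Your step (ii), by contrast, is fine and standard once made explicit: the per-vertex rounding $\rho(v)=y_v/2+1/4$ satisfies $\rho(u)(1-\rho(v))\ge \tfrac12\min(y_u,1-y_v)$ edge by edge, so the factor-$2$ loss is available locally. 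The genuine gap is entirely in step (i) and in the unjustified pairwise-independence claim.
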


Saxena, Singer, Sudan, and Velusamy \cite{DBLP:conf/soda/SaxenaS0V25} observed that for each vertex $v$, $\rho(v)$ depends only on the structure of a ball of radius $1/\epsilon$ centered at that node, and the private random strings of all the nodes in this ball. They called this the ``neighborhood type'' of a vertex. They observed that in order to compute $\dcut(G,\rho) = (\sum_{u\rightarrow v\in E} \rho(u)(1-\rho(v)))/m$ with a small additive error and a constant success probability, it suffices to sample a random constant-sized subset $S\subset E$ and output $\sum_{u\rightarrow v\in S} \rho(u) (1-\rho(v))/|S|$. This motivated them to define the notion of a neighborhood type of an edge that captures the neighborhood types of both its endpoints. Thus, estimating the distribution of the edge-neighborhood types would suffice to estimate $(\sum_{u\rightarrow v\in E} \rho(u)(1-\rho(v)))/m$, which in turn gives a $(1/2-\epsilon)$-approximation to $\mdcut(G)$. For graphs whose maximum degree is bounded by a constant, the set of all possible edge-neighborhood types is constant sized, and hence, they could estimate their distribution efficiently in the streaming setting.
In what follows, we will set up some notations to define the neighborhood types and reprove a key theorem from \cite{DBLP:conf/soda/SaxenaS0V25}.

Let $G(V,E)$ be a directed graph and $u,v \in V$. We say that $v$ is of distance at most $\ell$ from $u$ if there exists a \emph{path} from $u$ to $v$ which is a sequence of vertices $u = w_0,w_1,\ldots,w_{\ell-1},w_\ell=v \in V$ such that for each $i \in [\ell]$, $w_{i-1}\rightarrow w_i \in E$ or $w_i\rightarrow w_{i-1} \in E$.
Let $\DbAllTypesDeg{c}{\ell}{D}$ denote the set of all isomorphism classes of doubly rooted labeled graphs where every vertex is of degree at most $D$, has a $c$-bit label, and is of distance at most $\ell$ from (at least) one of the roots. Consider a directed graph $G(V,E)$ whose maximum degree is at most $D$. Let $\mathcal{H}_n$ be the pairwise independent hash family that is used to sample the private random strings for every node in the distributed algorithm of \cite{DBLP:conf/algosensors/Censor-HillelLS17,DBLP:conf/soda/SaxenaS0V25}. Let $h:V\rightarrow \{0,1\}^c$ be any hash function in $\mathcal{H}_n$. For an edge $e\in E$, consider the following doubly rooted labeled graph $\mathcal{G}_e$: $\mathcal{G}_e$ is rooted at the endpoints of $e$ and consists of all the vertices in $G$ which are of distance at most $\ell$ from (at least) one of the roots, and their corresponding induced subgraph. In addition, each vertex $v$ in $G_e$ is labeled $h(v)$.
The radius-$\ell$ neighborhood type of $e$ with respect to $G$ and the hash function $h$, $\nbrtype{G}{h}{\ell}{e}$, is defined to be the representative of the isomorphism class in $\DbAllTypesDeg{c}{\ell}{D}$ that contains $\mathcal{G}_e$. The neighborhood edge-type distribution of $G$ with respect to the hash function $h$ is defined as follows.

\begin{definition}[Edge-type distribution]
    Let $h:V\rightarrow \{0,1\}^c$, $\ell,D \in \mathbb{N}$ and $G = (V,E)$ be a directed graph of maximum degree $D$. The \emph{radius-$\ell$ neighborhood type distribution} of $G$ with respect to $h$, denoted $\EdgeDist{G}{h}{\ell}$, is the distribution over $\DbAllTypesDeg{c}{\ell}{D}$ given by sampling a random $e \sim \Unif{E}$ and outputting $\nbrtype{G}{h}{\ell}{e}$.
\end{definition}

\begin{corollary}[\cite{DBLP:conf/algosensors/Censor-HillelLS17,DBLP:conf/soda/SaxenaS0V25}]\label{corollary_SSSV}
For every $\epsilon''>0$, there exist $\ell, c, n_0\in \mathbb{N}$ and a function $\Local : \DbAllTypesDeg{c}{\ell}{D} \to [0,1]$ such that for every $n\ge n_0$, there exists a hash family $\mathcal{H}_n$ such that the following holds. Let $G = (V,E)$ be a directed graph on $n$ vertices and of maximum degree $D$. Let $h\sim \Unif{\mathcal{H}_n}$. Then with probability at least $9/10$,
\[ \frac{\mdcut(G)}{2} - \epsilon'' \leq \Exp_{T \sim \EdgeDist{G}{h}{\ell}}[\Local(T)] \leq \mdcut(G). \]
\end{corollary}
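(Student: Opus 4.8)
The plan is to read this off from the LOCAL-model distributed algorithm of \cite{DBLP:conf/algosensors/Censor-HillelLS17,DBLP:conf/soda/SaxenaS0V25} stated above, exploiting that the per-vertex output of an $\ell$-round algorithm depends only on the radius-$\ell$ neighborhood. Apply that theorem with $\epsilon := \epsilon''$: this fixes a round count $\ell \le 1/\epsilon''$, a label length $c = O(\log(1/\epsilon''))$, a threshold $n_0$, and for each $n \ge n_0$ a pairwise-independent hash family $\mathcal{H}_n$ from which the nodes' private $c$-bit strings are drawn. In $\ell$ synchronous rounds a node $v$ learns precisely the induced subgraph on the radius-$\ell$ ball around $v$ together with the private string of every node in it, and the algorithm's decision at $v$ is determined by the isomorphism type of this labeled ball --- i.e.\ it is ``anonymous'' apart from those labels. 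Hence its output can be written as $\rho(v) = g(\tau_\ell(v))$, where $\tau_\ell(v)$ is the radius-$\ell$ neighborhood type of $v$ under $h$ and $g$ is a fixed function on such (rooted, $c$-labeled, degree-$\le D$) types that is constant on isomorphism classes. Verifying that the algorithm of \cite{DBLP:conf/algosensors/Censor-HillelLS17,DBLP:conf/soda/SaxenaS0V25} really has this ``label-only, isomorphism-invariant'' form is the one step that needs care; everything after it is bookkeeping.

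Given $g$, define $\Local : \DbAllTypesDeg{c}{\ell}{D} \to [0,1]$ by the following rule: a representative $T$ carries ordered roots $(u,v)$ (the tail and head of the edge it came from) and, by construction of $\nbrtype{G}{h}{\ell}{\cdot}$, contains both the radius-$\ell$ ball around $u$ and the radius-$\ell$ ball around $v$ with all labels; extract from $T$ the two rooted sub-types $T_u, T_v$ around these roots and set $\Local(T) = g(T_u)\,\bigl(1 - g(T_v)\bigr)$. This is well-defined on isomorphism classes because $g$ is. Now fix $n \ge n_0$, a directed graph $G = (V,E)$ of maximum degree $D$, and $h \in \mathcal{H}_n$, and let $\rho = \rho_h$ be the assignment the distributed algorithm produces under $h$. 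Since $\nbrtype{G}{h}{\ell}{u \rightarrow v}$ has sub-types exactly $\tau_\ell(u)$ and $\tau_\ell(v)$, we get
\[
  \Exp_{T \sim \EdgeDist{G}{h}{\ell}}[\Local(T)]
  \;=\; \frac{1}{|E|} \sum_{u \rightarrow v \in E} \rho(u)\,\bigl(1 - \rho(v)\bigr)
  \;=\; \dcut(G, \rho),
\]
the last equality being the definition of the expected dicut value under a rounding (recorded in \cref{SSSV_nhbd_dist}).

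It remains to bound $\dcut(G,\rho)$ on both sides. For the upper bound, $\dcut(G,\rho) = \Exp[\dcut(G, L \sqcup R)] \le \mdcut(G)$ since every realized ordered bipartition has value at most $\mdcut(G)$; this holds for every $h$. For the lower bound, the distributed-algorithm guarantee gives that with probability at least $9/10$ over $h \sim \Unif{\mathcal{H}_n}$ we have $\dcut(G,\rho) \ge (1/2 - \epsilon'')\,\mdcut(G) \ge \mdcut(G)/2 - \epsilon''$, using $\mdcut(G) \le 1$. Replacing $\dcut(G,\rho)$ by $\Exp_{T \sim \EdgeDist{G}{h}{\ell}}[\Local(T)]$ via the displayed identity yields the claimed two-sided inequality with probability at least $9/10$, as needed. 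Note that $\ell, c, n_0$ and $\Local$ depend only on $\epsilon''$ and $D$, not on $n$ or $G$, so the order of quantifiers in the statement is respected.
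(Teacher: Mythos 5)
Your proposal is correct and follows exactly the route the paper intends: the paper states this corollary as an import from \cite{DBLP:conf/algosensors/Censor-HillelLS17,DBLP:conf/soda/SaxenaS0V25} without a formal proof, and the surrounding prose in \cref{SSSV_nhbd_dist} sketches precisely your argument --- the $\ell$-round LOCAL output $\rho(v)$ is a function of the labeled radius-$\ell$ ball, so $\Local(T)=g(T_u)(1-g(T_v))$ makes $\Exp_{T}[\Local(T)]=\dcut(G,\rho)$, which is at most $\mdcut(G)$ always and at least $\mdcut(G)/2-\epsilon''$ with probability $9/10$ over the hash function. You also correctly isolate the only step requiring care (that the cited algorithm is isomorphism-invariant given the labels), which the paper likewise takes on faith from the cited works.
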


The following corollary implies that it suffices to compute a distribution that is close in total variation distance to $\EdgeDist{G}{h}{\ell}$.
\begin{corollary}\label{corollary_TV_dist}
    If $\mathcal{D} \in \Delta(\DbAllTypesDeg{c}{\ell}{D})$ is such that $\tvdist{\mathcal{D}}{\EdgeDist{G}{h}{\ell}} \leq \epsilon''$, then \[ \frac{\mdcut(G)}{2} - 2\epsilon'' \leq \Exp_{T \sim \mathcal{D}}[\Local(T)]  \leq \mdcut(G) +  \epsilon''\, . \]
\end{corollary}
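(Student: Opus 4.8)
The plan is to combine \cref{corollary_SSSV} with the elementary fact that the expectation of a $[0,1]$-valued function changes by at most the total variation distance when we replace the underlying distribution by a nearby one. Concretely, since $\Local$ takes values in $[0,1]$ and both $\mathcal{D}$ and $\EdgeDist{G}{h}{\ell}$ are distributions over the same set $\DbAllTypesDeg{c}{\ell}{D}$, I would first establish
\[ \abs*{\Exp_{T\sim\mathcal{D}}[\Local(T)] - \Exp_{T\sim\EdgeDist{G}{h}{\ell}}[\Local(T)]} \;\le\; \tvdist{\mathcal{D}}{\EdgeDist{G}{h}{\ell}} \;\le\; \epsilon''. \]
To see the first inequality, write the difference of expectations as $\sum_{T}\bigl(\mathcal{D}(T) - \EdgeDist{G}{h}{\ell}(T)\bigr)\,\Local(T)$, decompose the signed measure $\mathcal{D} - \EdgeDist{G}{h}{\ell}$ into its positive and negative parts (which have equal total mass $\tvdist{\mathcal{D}}{\EdgeDist{G}{h}{\ell}}$, since both are probability measures and hence the signed measure has total mass $0$), and bound $\Local(T)\in[0,1]$ termwise, keeping only the positive part for the upper bound and only the negative part for the lower bound.

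With this perturbation bound in hand, the corollary is immediate. \cref{corollary_SSSV} gives $\frac{\mdcut(G)}{2} - \epsilon'' \le \Exp_{T \sim \EdgeDist{G}{h}{\ell}}[\Local(T)] \le \mdcut(G)$, so adding the displayed $\epsilon''$-slack on each side yields
\[ \Exp_{T\sim\mathcal{D}}[\Local(T)] \;\le\; \Exp_{T \sim \EdgeDist{G}{h}{\ell}}[\Local(T)] + \epsilon'' \;\le\; \mdcut(G) + \epsilon'' \]
and
\[ \Exp_{T\sim\mathcal{D}}[\Local(T)] \;\ge\; \Exp_{T \sim \EdgeDist{G}{h}{\ell}}[\Local(T)] - \epsilon'' \;\ge\; \frac{\mdcut(G)}{2} - 2\epsilon'', \]
which is exactly the claim. (As in \cref{corollary_SSSV}, this holds on the event — of probability at least $9/10$ over $h\sim\Unif{\mathcal{H}_n}$ — on which the conclusion of that corollary holds; the statement should be read as a deterministic consequence of that event together with the hypothesis on $\mathcal{D}$.)

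There is no real obstacle here: the argument is a one-line application of the standard total-variation inequality for bounded test functions. The only points worth stating carefully are (i) that $\Local$ is literally the same function applied to both distributions, so the perturbation bound applies without any reparametrization, and (ii) the bookkeeping of the two $\epsilon''$ terms — one inherited from \cref{corollary_SSSV} and one from the total-variation approximation — which is why the additive slack in the lower bound degrades to $2\epsilon''$ while the upper bound retains slack $\epsilon''$.
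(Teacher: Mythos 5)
Your proof is correct and is exactly the argument the paper intends (the paper states this corollary without proof, treating it as an immediate consequence of \cref{corollary_SSSV} plus the standard bound $\abs{\Exp_{\mathcal{D}}[\Local] - \Exp_{\EdgeDist{G}{h}{\ell}}[\Local]} \le \tvdist{\mathcal{D}}{\EdgeDist{G}{h}{\ell}}$ for a $[0,1]$-valued test function). Your use of the half-$\ell_1$ convention for total variation is consistent with how the paper uses $\tvdist{\cdot}{\cdot}$ in the proof of \cref{main_thm_SSSV}, and the bookkeeping of the two $\epsilon''$ terms matches the stated bounds.
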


The following theorem (combined with \cref{corollary_TV_dist}) shows that for bounded-degree graphs, in order to achieve a $(1/2-\epsilon)$-approximation to $\mdcut$, it suffices to sample a sublinear number of random vertices and compute a ``re-scaled' edge-type neighborhood distribution of the induced subgraph.
\begin{theorem}[Simplified non-streaming version of Algorithm 2 in \cite{DBLP:conf/soda/SaxenaS0V25}]\label{main_thm_SSSV}
    For every $\epsilon''>0, \alpha\in [0,1), D,c,\ell\in \mathbb{N}$, there exist $\beta_0\in (0,1)$ and $n_0\in \mathbb{N}$ such that for all $n\ge n_0$ and $n^{-\beta_0} \le p \le 1$, the following hold: Let $G(V,E)$ be a directed graph on $n$ vertices and $m$ edges such that the maximum degree is at most $D$ and at most $\alpha$ fraction of the vertices are isolated. Let $h\in V\rightarrow\{0,1\}^c$. Let $V'\subset V$ be a random subset where every vertex in $V$ is sampled independently with probability $p$. Let $G'\subseteq G$ be the induced subgraph of $G$ on the vertices in $V'$. For every $T\in \DbAllTypesDeg{c}{\ell}{D}$, let $a_T$ denote the number of edges $e$ in $G'$ such that all the vertices in $G$ which are at a distance of at most $\ell$ from at least one of the endpoints of $e$, are contained in $V'$, and $\nbrtype{G'}{h|_{V'}}{\ell}{e} = T$. Consider the distribution $\mathcal{D}\in \Delta(\DbAllTypesDeg{c}{\ell}{D})$, defined to be
    \[\mathcal{D}(T) = \frac{a_T \cdot p^{-|T|}}{\sum_{T'\in \DbAllTypesDeg{c}{\ell}{D}}a_{T'} \cdot p^{-|T'|}} \, , \] where $T\in\DbAllTypesDeg{c}{\ell}{D}$ and $|T|$ denotes the number of vertices in $T$. Then, with probability at least $9/10$, $\tvdist{\mathcal{D}}{\EdgeDist{G}{h}{\ell}}\le \epsilon''$.
\end{theorem}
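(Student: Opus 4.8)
The plan is to recognize $a_T\,p^{-|T|}$ as an unbiased estimator of the number of type-$T$ edges of $G$, and then to prove concentration by exploiting the bounded-degree hypothesis. For an edge $e\in E$, let $B_e$ denote the set of vertices of $G$ at distance at most $\ell$ from at least one endpoint of $e$, and set $X_e=\1[B_e\subseteq V']$; since vertices are sampled independently, $X_e=\prod_{w\in B_e}\1[w\in V']$ and $\Exp[X_e]=p^{|B_e|}$. First I would record a deterministic structural claim: if $B_e\subseteq V'$, then every shortest path in $G$ from an endpoint of $e$ to a vertex of $B_e$ passes only through vertices of $B_e$ (intermediate vertices are themselves within distance $\ell$ of an endpoint), so such paths survive in $G'$; hence distances from the endpoints of $e$ agree in $G$ and $G'$, the radius-$\ell$ balls agree as $h$-labeled induced subgraphs, $e\in G'$, and $\nbrtype{G'}{h|_{V'}}{\ell}{e}=\nbrtype{G}{h}{\ell}{e}$ with $|B_e|$ equal to the number of vertices of that type. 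Writing $b_T$ for the (deterministic) number of edges of $G$ of neighborhood type $T$, so that $\EdgeDist{G}{h}{\ell}(T)=b_T/m$, this gives $a_T=\sum_{e:\,\nbrtype{G}{h}{\ell}{e}=T}X_e$, hence $\Exp[a_T\,p^{-|T|}]=b_T$ and $\Exp[Z]=m$, where $Z:=\sum_{T'}a_{T'}p^{-|T'|}=\sum_e p^{-|B_e|}X_e$.

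The second step is a variance bound. Because the maximum degree of $G$ is at most $D$, every ball $B_e$ has at most a constant $M=M(D,\ell)$ vertices, and every vertex of $G$ lies in at most a constant $M'=M'(D,\ell)$ of the balls $B_e$; thus each $X_e$ is correlated with at most $MM'$ others, and for overlapping balls $\mathrm{Cov}(X_e,X_{e'})=p^{|B_e\cup B_{e'}|}-p^{|B_e|+|B_{e'}|}\in\bigl[0,\,p^{|B_e\cup B_{e'}|}\bigr]$. This yields $\Var(a_T)\le MM'\cdot\Exp[a_T]$ and $\Var(Z)\le MM'\,m\,p^{-M}$. The key quantitative input is that at most an $\alpha<1$ fraction of the vertices are isolated, so $m\ge\tfrac{1-\alpha}{2}n$; combined with $p\ge n^{-\beta_0}$, choosing $\beta_0<1/M$ forces $\Exp[a_T]\ge\tfrac{\delta(1-\alpha)}{2}\,n^{1-\beta_0 M}\to\infty$ for every ``frequent'' type $T$ (one with $b_T\ge\delta m$), and $\Var(Z)/m^2=O(n^{\beta_0 M-1})\to0$. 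Chebyshev's inequality, together with a union bound over the constantly many types in $\DbAllTypesDeg{c}{\ell}{D}$, then shows that for $n$ large enough, with probability at least $9/10$ we simultaneously have $Z=(1\pm\eta)m$ and $a_T p^{-|T|}=(1\pm\eta)b_T$ for every frequent type $T$.

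Finally I would assemble the bound on $\tvdist{\mathcal{D}}{\EdgeDist{G}{h}{\ell}}=\tfrac12\sum_T\card*{\mathcal{D}(T)-b_T/m}$. On the good event, for each frequent $T$ we get $\mathcal{D}(T)=a_Tp^{-|T|}/Z=(1\pm O(\eta))\,b_T/m$, so these types contribute $O(\eta)$ to the sum; for the rare types, $\sum_{\text{rare}}b_T/m<\card*{\DbAllTypesDeg{c}{\ell}{D}}\,\delta$ by definition, while $\sum_{\text{rare}}\mathcal{D}(T)=1-\sum_{\text{frequent}}\mathcal{D}(T)\le\sum_{\text{rare}}b_T/m+O(\eta)$, so the rare types contribute at most $2\card*{\DbAllTypesDeg{c}{\ell}{D}}\,\delta+O(\eta)$. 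Taking $\delta=\epsilon''/(4\card*{\DbAllTypesDeg{c}{\ell}{D}})$ and $\eta$ a small enough absolute multiple of $\epsilon''$ (and then $\beta_0<1/M$ and $n_0$ large enough that the failure probabilities sum to below $1/10$) gives $\tvdist{\mathcal{D}}{\EdgeDist{G}{h}{\ell}}\le\epsilon''$, as required.

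I expect the main obstacle to be the concentration step: the indicators $X_e$ are genuinely dependent, so one must use the bounded-degree hypothesis both to control the dependency structure (each $X_e$ interacts with only $O(1)$ others) and, via $m=\Omega(n)$ from the isolated-vertex assumption together with a sufficiently small $\beta_0=\beta_0(D,\ell)$, to ensure the relative estimation error for frequent types vanishes as $n\to\infty$; everything else is bookkeeping about rare types and the renormalization by $Z$.
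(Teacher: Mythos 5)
Your proposal is correct and follows essentially the same route as the paper's proof: $a_T\,p^{-|T|}$ is an unbiased estimator of the number of type-$T$ edges, bounded degree makes each indicator dependent on only $O_{D,\ell}(1)$ others so that $\Var(a_T)=O_{D,\ell}(\Exp[a_T])$, and Chebyshev plus a union bound over the constantly many types, together with $m=\Omega_\alpha(n)$ and $\beta_0$ small in terms of $D^\ell$, gives the concentration needed. The only difference is in the final bookkeeping: you split types into frequent and rare and use multiplicative concentration for the frequent ones (plus a separate variance bound for the normalizer $Z$), whereas the paper establishes a uniform additive bound $|a_T p^{-|T|}-b_T|\le \delta m$ for every type and then controls the renormalization by $\sum_{T'}a_{T'}p^{-|T'|}$ via a direct triangle-inequality computation; both assemblies are valid.
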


\begin{proof}
   Define $r_{c}^{\ell,D} \triangleq |\DbAllTypesDeg{c}{\ell}{D}|$. Fix $\beta_0 = \frac{1}{4D^\ell}$ and $n_0 = \left(\frac{20 (r_{c}^{\ell,D})^3 D^{2\ell+1}}{(\epsilon'')^2 (1-\alpha)}\right)^2$. We want to show that \[\sum_{T\in \DbAllTypesDeg{c}{\ell}{D}} |\mathcal{D}(T) -\EdgeDist{G}{h}{\ell}(T) |\le 2\epsilon''\, .\]
   Fix a type $T\in \DbAllTypesDeg{c}{\ell}{D}$. Let $S_T$ denote the set of all edges in $E$ such that $\nbrtype{G}{h}{\ell}{e} = T$. For $e\in S_T$, let $Y^T_e$ be the indicator random variable for the event that all vertices in $V$ which are at a distance of at most $\ell$ from at least one of the endpoints of $e$ belong to $V'$. Then, $a_T = \sum_{e\in S_T} Y_e^T$ because for $e\in S_T$, $\nbrtype{G}{h}{\ell}{e} = \nbrtype{G'}{h|_{V'}}{\ell}{e}$ iff $Y^T_e=1$.
   The expected value of $a_T$ is \[\Exp[a_T] = |S_T| \cdot p^{|T|} = \EdgeDist{G}{h}{\ell}(T)\cdot m \cdot p^{|T|}\, .\] We now compute the variance of $a_T$. Observe that
   \[\Var[a_T]\le \sum_{e,e'\in S_T: Y_e^T\text{ and }Y_{e'}^T\text{ are dependent}} \Exp[Y_e^T Y_{e'}^T] \, .\]
   Consider any edge $e\in S_T$. There are at most $2D^{2\ell+1}$ edges $e'$ such that $Y_{e'}^T$ and $Y_e^T$ are dependent. This is because $e$ and $e'$ are dependent iff they share an endpoint or there exists a vertex that is at a distance of at most $\ell$ from one of the endpoints of $e$ and one of the endpoints of $e'$. The bound follows because the maximum degree of any vertex in $G$ is at most $D$. 
   Since $Y_{e'}^T \le 1$, this implies that 
   \[\Var[a_T]\le 2D^{2\ell+1} \Exp[a_T] \, .\]
   Let $\delta = \epsilon''/r_{c}^{\ell,D}$. Applying Chebyshev's inequality (see \cref{thm:Chebyshev}), we have
   \[\Pr[|a_T-\Exp[a_T]|>\delta m p^{|T|}]\le \frac{2D^{2\ell+1} p^{- |T|}}{\delta^2 m} \le 1/(10r_{c}^{\ell,D})\, ,\] by the choice of $\beta_0$, $n_0$, and using the fact that $p\ge n^{-1/(2|T|)}$, $n\ge n_0$ and $m\ge n(1-\alpha)$.
   Taking the union bound over all possible types $T$, we have that with probability at least $9/10$, for every $T\in \DbAllTypesDeg{c}{\ell}{D}$, $|a_T-\Exp[a_T]|\le \delta m p^{|T|}$, and thus, \[\sum_{T\in \DbAllTypesDeg{c}{\ell}{D}}|a_T-\Exp[a_T]|\le \epsilon'' m p^{|T|} \, .\]

Then,
\begin{align*}
    &\sum_{T\in \DbAllTypesDeg{c}{\ell}{D}} |\mathcal{D}(T) -\EdgeDist{G}{h}{\ell}(T) | \\
    &= \sum_{T\in \DbAllTypesDeg{c}{\ell}{D}} \left|\frac{a_T p^{-|T|}}{m} -\EdgeDist{G}{h}{\ell}(T) +\mathcal{D}(T) - \frac{a_T p^{-|T|}}{m}\right| \\
    &\le           \sum_{T\in \DbAllTypesDeg{c}{\ell}{D}} \left|\frac{a_T p^{-|T|}}{m} -\EdgeDist{G}{h}{\ell}(T) \right|+ \left|\mathcal{D}(T) - \frac{a_T p^{-|T|}}{m}\right|                          \\
    &\le \epsilon'' + \sum_{T\in \DbAllTypesDeg{c}{\ell}{D}} \left|\frac{a_T  p^{- |T|}}{\sum_{T'\in \DbAllTypesDeg{c}{\ell}{D}}a_{T'}  p^{-|T'|}} - \frac{a_T p^{-|T|}}{m}\right|\\
    &= \epsilon'' + \sum_{T\in \DbAllTypesDeg{c}{\ell}{D}} \frac{a_T p^{-|T|}}{\sum_{T'\in \DbAllTypesDeg{c}{\ell}{D}}a_{T'}  p^{-|T'|}} \left|\frac{\sum_{T'\in \DbAllTypesDeg{c}{\ell}{D}}a_{T'}  p^{-|T'|}}{m}-1\right|\\
    &\le \epsilon'' + \sum_{T\in \DbAllTypesDeg{c}{\ell}{D}} \frac{a_T p^{-|T|}}{\sum_{T'\in \DbAllTypesDeg{c}{\ell}{D}}a_{T'} \cdot p^{-|T'|}} \sum_{T'\in \DbAllTypesDeg{c}{\ell}{D}}\left|\frac{a_{T'}  p^{-|T'|}}{m}-\EdgeDist{G}{h}{\ell}(T')\right| \le 2\epsilon''\, .
\end{align*}

\end{proof}

\subsection{Concentration inequalities}
In this section, we list some standard concentration bounds.

\begin{theorem}[Chebyshev's inequality]\label{thm:Chebyshev}
    Let $X$ be a random variable with variance $\sigma^2$. For any $a>0$, we have
    \[\Pr[|X-\Exp[X]|> a] \le \frac{\sigma^2}{a^2}\, .\]
\end{theorem}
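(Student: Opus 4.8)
The plan is to obtain Chebyshev's inequality as an immediate consequence of Markov's inequality applied to a well-chosen non-negative random variable. First I would recall (or quickly establish) Markov's inequality: for any non-negative random variable $Y$ and any threshold $t>0$, one has $\Pr[Y\ge t]\le \Exp[Y]/t$. This follows from $\Exp[Y]\ge \Exp[Y\cdot\1[Y\ge t]]\ge t\cdot\Pr[Y\ge t]$, where the first inequality uses $Y\ge 0$ and the second uses that $Y\ge t$ on the event $\{Y\ge t\}$.

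Next I would instantiate this with $Y=(X-\Exp[X])^2$. This $Y$ is non-negative, and by the definition of variance, $\Exp[Y]=\Var[X]=\sigma^2$. The key observation is that, since $a>0$ and $x\mapsto x^2$ is strictly increasing on the non-negative reals, the event $\{|X-\Exp[X]|>a\}$ is exactly the event $\{Y>a^2\}$.

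Combining these, for any $a>0$,
\[ \Pr[|X-\Exp[X]|>a] \;=\; \Pr[Y>a^2] \;\le\; \Pr[Y\ge a^2] \;\le\; \frac{\Exp[Y]}{a^2} \;=\; \frac{\sigma^2}{a^2}, \]
which is precisely the claimed bound. There is no genuine obstacle in this argument; the only point requiring a moment's care is the passage from the strict event $\{Y>a^2\}$ to the weak event $\{Y\ge a^2\}$, and this is harmless because Markov's inequality bounds the probability of the larger event $\{Y\ge a^2\}$, so the same bound applies a fortiori to the smaller event $\{Y>a^2\}$.
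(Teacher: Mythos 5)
Your proof is correct and is the standard derivation of Chebyshev's inequality from Markov's inequality applied to $(X-\Exp[X])^2$; the paper states this result without proof as a standard concentration bound, so there is nothing to compare against and your argument is exactly the canonical one that would be supplied.
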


\begin{theorem}[Chernoff bound]\label{prelim:chernoff}
Let $X_1,\dots,X_n$ be independent random variables taking values in $\{0,1\}$. Consider the sum of these random variables, $S_n = X_1 + \cdots + X_n$. Let $\mu = \Exp[S_n]$. Then for $R \ge 2 e \mu  $, \[\Pr[S_n \ge R]\le 2^{-R}\, .\]
\end{theorem}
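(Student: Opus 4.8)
The plan is to use the standard exponential-moment (Chernoff--Cram\'er) method: for a parameter $t>0$ to be optimized, apply Markov's inequality to the nonnegative random variable $e^{tS_n}$, obtaining $\Pr[S_n\ge R]=\Pr[e^{tS_n}\ge e^{tR}]\le e^{-tR}\,\Exp[e^{tS_n}]$. Since the $X_i$ are independent, $\Exp[e^{tS_n}]=\prod_i\Exp[e^{tX_i}]$, and for a Bernoulli variable with $\Pr[X_i=1]=p_i$ one has $\Exp[e^{tX_i}]=1+p_i(e^t-1)\le\exp\!\big(p_i(e^t-1)\big)$ by the inequality $1+x\le e^x$. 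Multiplying these bounds and using $\sum_i p_i=\mu$ gives $\Exp[e^{tS_n}]\le\exp\!\big((e^t-1)\mu\big)$, hence $\Pr[S_n\ge R]\le\exp\!\big(-tR+(e^t-1)\mu\big)$.

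Next I would choose $t$ to minimize the exponent. Assuming $\mu>0$ (the case $\mu=0$ forces $S_n\equiv 0$, so the claim is trivial there), the exponent $-tR+(e^t-1)\mu$ is minimized at $e^t=R/\mu$, i.e.\ $t=\ln(R/\mu)$; this is a legitimate choice of $t>0$ since the hypothesis $R\ge 2e\mu$ ensures $R/\mu>1$. Substituting yields $\Pr[S_n\ge R]\le\exp\!\big(-R\ln(R/\mu)+R-\mu\big)$.

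Finally I would invoke the hypothesis $R\ge 2e\mu$ to finish: it gives $\ln(R/\mu)\ge\ln(2e)=1+\ln 2$, so $-R\ln(R/\mu)+R-\mu\le -R(1+\ln 2)+R-\mu=-R\ln 2-\mu\le -R\ln 2$, and exponentiating gives $\Pr[S_n\ge R]\le e^{-R\ln 2}=2^{-R}$. There is no genuine obstacle here --- this is a textbook multiplicative Chernoff estimate --- and the only points deserving a word of care are the degenerate case $\mu=0$ and the observation that the constant $2e$ in the threshold is exactly the value for which $\ln(R/\mu)-1\ge\ln 2$, which is what makes the clean power-of-two tail bound fall out.
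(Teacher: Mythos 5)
Your proof is correct: the exponential-moment argument, the choice $t=\ln(R/\mu)$, and the final estimate $-R\ln(R/\mu)+R-\mu\le -R\ln 2$ under the hypothesis $R\ge 2e\mu$ all check out, and the degenerate case $\mu=0$ is handled properly. The paper states this bound as a standard concentration inequality without proof, and your derivation is exactly the textbook argument it implicitly relies on, so there is nothing to reconcile.
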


\begin{theorem}[Hoeffding's inequality]\label{thm:hoeffding's}
Let $X_1,\dots,X_n$ be independent bounded random variables such that $a_i\le X_i\le b_i$. Consider the sum of these random variables, $S_n = X_1 + \cdots + X_n$. Then Hoeffding's theorem states that for all $t>0$,
\[\Pr[|S_n-\Exp[S_n]|\ge t]\le 2\exp\left(-\frac{2t^2}{\sum_{i=1}^n(b_i-a_i)^2}\right) \, .\]
\end{theorem}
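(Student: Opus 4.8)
The plan is to apply the standard Chernoff--Cram\'er exponential moment method, the one nontrivial input being Hoeffding's lemma bounding the moment generating function of a bounded random variable. First I would center the variables: set $\bar X_i = X_i - \Exp[X_i]$ and $\bar S_n = S_n - \Exp[S_n] = \sum_{i=1}^n \bar X_i$. For any $\lambda > 0$, the exponential Markov inequality gives $\Pr[\bar S_n \ge t] \le e^{-\lambda t}\,\Exp[e^{\lambda \bar S_n}]$, and since the $X_i$ (hence the $\bar X_i$) are independent, the moment generating function factors: $\Exp[e^{\lambda \bar S_n}] = \prod_{i=1}^n \Exp[e^{\lambda \bar X_i}]$.

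Next I would establish Hoeffding's lemma: if $Y$ satisfies $\Exp[Y] = 0$ and $a \le Y \le b$ almost surely, then $\Exp[e^{\lambda Y}] \le \exp(\lambda^2 (b-a)^2/8)$ for all $\lambda \in \mathbb{R}$. The proof uses convexity of $y \mapsto e^{\lambda y}$ to bound $e^{\lambda Y} \le \frac{b - Y}{b-a} e^{\lambda a} + \frac{Y-a}{b-a} e^{\lambda b}$ pointwise on $[a,b]$; taking expectations and using $\Exp[Y] = 0$ yields $\Exp[e^{\lambda Y}] \le \frac{b}{b-a} e^{\lambda a} - \frac{a}{b-a} e^{\lambda b} = e^{\varphi(\lambda)}$. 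Writing $p = -a/(b-a) \in [0,1]$ (note $a \le 0 \le b$ because $\Exp[Y] = 0$), one has $\varphi(\lambda) = \lambda a + \log\!\big(1 - p + p\,e^{\lambda(b-a)}\big)$, and a direct computation gives $\varphi(0) = \varphi'(0) = 0$ and $\varphi''(\lambda) = (b-a)^2 q(1-q) \le (b-a)^2/4$ for a quantity $q = q(\lambda) \in [0,1]$; Taylor's theorem with remainder then gives $\varphi(\lambda) \le \lambda^2 (b-a)^2/8$.

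Applying the lemma to each $\bar X_i$ --- which lies in an interval of length $b_i - a_i$ --- gives $\Exp[e^{\lambda \bar S_n}] \le \exp\!\big(\tfrac{\lambda^2}{8}\sum_{i=1}^n (b_i - a_i)^2\big)$, so that $\Pr[\bar S_n \ge t] \le \exp\!\big(-\lambda t + \tfrac{\lambda^2}{8}\sum_{i=1}^n (b_i-a_i)^2\big)$. Optimizing over $\lambda > 0$ (the minimizer is $\lambda^\star = 4t / \sum_{i=1}^n (b_i - a_i)^2$, which we may use since otherwise $\sum_i (b_i-a_i)^2 = 0$, every $X_i$ is a.s.\ constant, and the claim is trivial) gives $\Pr[\bar S_n \ge t] \le \exp\!\big(-2t^2/\sum_{i=1}^n (b_i-a_i)^2\big)$. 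Running the identical argument with $-\bar S_n$ in place of $\bar S_n$ bounds the lower tail $\Pr[\bar S_n \le -t]$ by the same quantity, and a union bound over $\{\bar S_n \ge t\} \cup \{\bar S_n \le -t\}$ produces the factor of $2$ in the statement.

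The only step requiring genuine work is Hoeffding's lemma --- specifically the bound $\varphi''(\lambda) \le (b-a)^2/4$, which rests on the elementary inequality $q(1-q) \le 1/4$ after the substitution above; the rest (exponential Markov, factorization by independence, the one-variable optimization, and the two-sided union bound) is routine bookkeeping.
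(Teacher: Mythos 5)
Your proof is correct: it is the standard Chernoff--Cram\'er argument via Hoeffding's lemma, with the exponent optimized at $\lambda^\star = 4t/\sum_i(b_i-a_i)^2$ and the two-sided bound obtained by a union bound. The paper states Hoeffding's inequality as a standard concentration bound without proof, so there is nothing to compare against; your argument is the canonical one and fills that role correctly.
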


\section{Reduction from unbounded-degree graphs to bounded-degree graphs}\label{sec:modified_trevisan}
In this section, we give a generalized version of the Trevisan reduction \cite{DBLP:conf/stoc/Trevisan01} that is tailored to be implementable in the two-pass streaming setting in \cref{sec:twopass}.

\begin{algorithm}[H]
\caption{Algorithm for reducing unbounded-degree graphs to bounded-degree graphs}\label{Alg:trevisan}
\begin{algorithmic}[1]
\Statex \textbf{Parameters:} $n, m \in \mathbb{N}$, $\epsilon' > 0$, $\zeta>0$
\Statex \textbf{Input:} 
\Statex A directed graph $G(V,E)$ on $n$ vertices and $m$ edges
\Statex An $n$-length array $\appdeg$ (indexed by $V$) such that when $\deg(v)< n^\zeta$, we have $\appdeg(v)=\deg(v)$ and when $\deg(v)\ge n^\zeta$, we have $(1-\epsilon'/100)\deg(v) \le \appdeg[v]\le (1+\epsilon'/100)\deg(v)$
\Statex

\State Initialize directed graphs $\tilde{G}(\tilde{V}, \tilde{E}), \overline{G}(\overline{V},\overline{E})$, where $\tilde{V}=\overline{V} =\{(v,i):v\in V, i\in [\deg(v)]\} $ and $\tilde{E}=\emptyset,\overline{E}=\emptyset$
\State Set a parameter $d=80/(\epsilon')^2$
\For{every edge $u\rightarrow v\in E$}
\State Initialize $\mathsf{i}=1$
\For {$\mathsf{i}\le d$}
\State Sample $i_1\sim \Unif{[\appdeg(u)]}$ \label{alg_samplei1_trev}
\State Sample $i_2\sim \Unif{[\appdeg(v)]}$ \label{alg_samplei2_trev}
\If{$i_1<\deg(u)$ and $i_2<\deg(v)$} \label{check_deg}
\State Add $(u,i_1)\rightarrow(v,i_2)$ to $\tilde{E}$
\EndIf
\State Increment $\mathsf{i}$ by $1$
\EndFor
\EndFor
\State Copy all the edges in $\tilde{E}$ to $\overline{E}$
\For{every $v\in \overline{V}$ such that $\deg(v)> 11 d$}
\State Delete all the edges incident on $v$ in $\overline{E}$
\EndFor
\State Output $\overline{G}(\overline{V},\overline{E})$
\end{algorithmic}
\end{algorithm}

\begin{theorem}\label{thm:trevisan_reduction}
    Let $n\ge 2, m\ge 10\in \mathbb{N}$. For any directed graph $G(V,E)$ on $n$ vertices and $m$ edges, and $0<\epsilon'\le 1$, $\zeta>0$, with probability at least $5/6$, \cref{Alg:trevisan} outputs a directed graph $\overline{G}(\overline{V},\overline{E})$ on $2m$ vertices and $\Theta(m/(\epsilon')^2)$ edges such that , $\mdcut(G) - \epsilon' \le \mdcut(\overline{G}) \le \mdcut(G) + \epsilon'$, and the maximum degree of a vertex in $\overline{G}$ is at most $O(1/(\epsilon')^2)$.
\end{theorem}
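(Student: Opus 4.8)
The plan is to run Trevisan's argument through the specific randomness of \cref{Alg:trevisan}, absorbing its two departures from the classical construction: the sampling is \emph{local} (each original edge spawns $d$ copies that independently pick random endpoint-copies, instead of one global weighted sample), and the endpoint-copies are chosen using the \emph{approximate} degrees in $\appdeg$. Write $d_v=\deg(v)$, $a_v=\appdeg(v)$, and $q_v=\min(d_v,a_v)/a_v$. Two facts follow at once from the hypothesis on $\appdeg$: $|\overline{V}|=|\tilde{V}|=\sum_v d_v=2m$ exactly, and $1-\epsilon'/100\le q_v\le 1$ for every $v$, with $q_v=1$ whenever $d_v<n^\zeta$ (then $a_v=d_v$). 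The basic random object is a single copy of an edge $u\to v$: it is added to $\tilde{E}$ with probability exactly $q_uq_v$, independently of every other copy; conditioned on being added, it points from a uniformly random copy of $u$ among its first $\min(d_u,a_u)$ copies to a uniformly random copy of $v$ among its first $\min(d_v,a_v)$ copies.

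Next I would pin down two expectations. Let $Y=|\tilde{E}|$, and for an ordered bipartition $\sigma=(L,R)$ of $\tilde{V}$ let $X_\sigma$ count the copies in $\tilde{E}$ crossing $\sigma$; both are sums of $dm$ independent indicators. A copy of $u\to v$ crosses $\sigma$ with probability $(L_u/a_u)(R_v/a_v)$, where $L_u$ counts the copies of $u$ in $L$ among its first $\min(d_u,a_u)$ copies and $R_v$ is symmetric. Setting $\rho(w)=L_w/\min(d_w,a_w)\in[0,1]$ and using $\min(d_w,a_w)\le a_w$ gives $\Exp[X_\sigma]\le d\sum_{u\to v}\rho(u)(1-\rho(v))=dm\cdot\dcut(G,\rho)\le dm\cdot\mdcut(G)$, since no fractional rounding of $G$ beats $\mdcut(G)$; this substitution is exactly what quietly restricts attention to the copies that can receive an edge, handling the ``lost'' copies $(v,a_v),\dots,(v,d_v-1)$ that appear when $a_v<d_v$. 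In the other direction, taking $\sigma_0$ to be the lift of an optimal integral bipartition $(L^\ast,R^\ast)$ of $G$, a copy of $u\to v$ crosses $\sigma_0$ with probability $q_uq_v$ if $u\to v$ is cut by $(L^\ast,R^\ast)$ and $0$ otherwise, so $\Exp[X_{\sigma_0}]\ge(1-\epsilon'/50)\,dm\cdot\mdcut(G)$; likewise $\Exp[Y]=d\sum_{u\to v}q_uq_v\ge(1-\epsilon'/50)dm$, while $Y\le dm$ always, which already determines $|\overline{E}|$ up to a constant factor.

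Then I would invoke concentration. Hoeffding's inequality makes $Y$ and $X_{\sigma_0}$ fall below their means by more than a small constant times $\epsilon' dm$ only with probability $e^{-\Theta((\epsilon')^2 dm)}=e^{-\Theta(m)}$, and the same constant is small enough that a union bound over all $2^{2m}$ bipartitions still shows $X_\sigma\le dm\cdot\mdcut(G)+O(\epsilon' dm)$ \emph{simultaneously} for every $\sigma$ with high probability --- this is the first place the size $d=80/(\epsilon')^2$ matters. For the degree cap, fix a copy $(v,k)$: its $\tilde{G}$-degree is a sum of at most $d_v d$ independent indicators, each landing on $(v,k)$ with probability at most $1/a_v$, so its mean is at most $d_v d/a_v\le\tfrac{100}{99}d$; since $11d\ge 2e\cdot\tfrac{100}{99}d$, \cref{prelim:chernoff} gives $\Pr[\deg_{\tilde{G}}(v,k)>11d]\le 2^{-11d}$. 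Bounding the truncated expectation by a geometric tail, summing over the $2m$ copies, and applying Markov's inequality then shows that with high probability the total degree of capped copies --- hence the number of deleted edges --- is at most $\epsilon' dm/10$; here $2^{-11d}$ is doubly exponentially small in $1/\epsilon'$, which absorbs the factor-$2m$ union-bound loss for every $m$ and every $\epsilon'\in(0,1]$, so it suffices to control the capped vertices in expectation rather than rule them out.

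Finally I would assemble the output on the intersection of these events, which has probability at least $5/6$ once $m\ge 10$ and $n\ge 2$. Deterministically $|\overline{V}|=2m$ and $\overline{G}$ has maximum degree at most $11d=O(1/(\epsilon')^2)$, and $|\overline{E}|=Y-\#\{\text{deleted}\}$ lies between $(1-O(\epsilon'))dm$ and $dm$, hence is $\Theta(m/(\epsilon')^2)$. For $\mdcut(\overline{G})\ge\mdcut(G)-\epsilon'$, evaluate $\overline{G}$ on $\sigma_0$: each deletion destroys at most one crossing copy, so $\dcut(\overline{G},\sigma_0)\ge(X_{\sigma_0}-\#\{\text{deleted}\})/(dm)\ge(1-\epsilon'/50)\mdcut(G)-O(\epsilon')\ge\mdcut(G)-\epsilon'$. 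For $\mdcut(\overline{G})\le\mdcut(G)+\epsilon'$, take an optimal integral bipartition $\sigma^\ast$ of $\overline{G}$: deletions only remove crossing copies, so $\mdcut(\overline{G})\le X_{\sigma^\ast}/|\overline{E}|\le(dm\cdot\mdcut(G)+O(\epsilon' dm))/((1-O(\epsilon'))dm)\le\mdcut(G)+\epsilon'$, using $\mdcut(G)\le 1$. I expect the only genuinely delicate part to be the bookkeeping forced by local, approximate sampling: the conditional analysis of the rejection test, the treatment of the lost copies when $\appdeg(v)<\deg(v)$ (handled by the $\rho(w)=L_w/\min(d_w,a_w)$ substitution), and the control of the edges removed by the blunt ``delete everything at a capped vertex'' rule rather than Trevisan's minimal one, with the constants $d=80/(\epsilon')^2$ and threshold $11d$ tuned precisely so that Chernoff plus Markov suffice for the last point without any union bound over $m$.
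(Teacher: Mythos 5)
Your proposal is correct and follows essentially the same route as the paper's proof: concentration of $|\tilde{E}|$, comparison of each bipartition's expected cut count with a fractional dicut value $\dcut(G,\rho)$ of the original graph, Hoeffding plus a union bound over the $2^{2m}$ bipartitions, and a Chernoff tail on copy-degrees followed by Markov to bound the mass deleted at capped vertices. The only cosmetic differences are that you normalize $\rho$ by $\min(\deg(w),\appdeg(w))$ and split the argument into a one-sided union bound plus a single lifted optimal partition, whereas the paper normalizes by $\deg(w)$ and union-bounds two-sidedly; both handle the approximate-degree slack equivalently.
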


\begin{proof}
Recall the construction of $\tilde{G}$ in \cref{Alg:trevisan}. We will first argue that the number of edges in $\tilde{E}$ is at least $md(1-\epsilon'/8)$, with high probability. Observe that for any edge $u\rightarrow v\in E$, when we sample its $\mathsf{i}$-th copy in $\tilde{G}$, the probability that either $i_1$ or $i_2$ fail to satisfy the condition in \cref{check_deg} is at most $\epsilon'/50$. Therefore, applying Chernoff bound (\cref{prelim:chernoff}), we conclude that with probability at least $1-\exp(-10m)$, $|\tilde{E}|\ge md(1-\epsilon'/8)$.

\paragraph*{$\mdcut$ value of $\tilde{G}$.}  Fix any ordered bipartition $L \sqcup R$ of $\tilde{V}$. For an edge $e\in E$, let $X_{e,\mathsf{i}}$ be the indicator random variable for the event that in the $i$-th iteration corresponding to edge $e$, the condition in \cref{check_deg} is satisfied and the corresponding edge in $\tilde{E}$ belongs to the dicut $L\sqcup R$.  Then, the $\dcut$ value of $\tilde{G}$ under $L\sqcup R$ is given by \[\dcut(\tilde{G}, L\sqcup R)  = \frac{\sum_{e\in E, 1\le \mathsf{i}\le d} X_{e,\mathsf{i}}}{|\tilde{E}| }\, .\] 
Let us define a function $\rho_{L\sqcup R}:V\rightarrow [0,1]$ such that for $v\in V$, $\rho_{L\sqcup R}(v) = (|L\cap \{(v,i):i\in [\deg(v)]\}|)/\deg(v)$, i.e., $\rho_{L\sqcup R}(v)$ is the fraction of copies of $v$ in $L$.
Then,
\[ \dcut(G,\rho_{L\sqcup R}) - \epsilon'/33 \le  \frac{\sum_{e\in E, 1\le \mathsf{i}\le d} \Exp[X_{e,\mathsf{i}}]}{dm} \le   \dcut(G,\rho_{L\sqcup R})+ \epsilon'/33\, .\]
This is because for every $u\rightarrow v \in E, 1\le \mathsf{i}\le d$, \begin{equation} \label{eqn:rho}
\rho_{L\sqcup R}(u)(1-\rho_{L\sqcup R}(v)) - \epsilon'/33 \le X_{u\rightarrow v,\mathsf{i}} \le \rho_{L\sqcup R}(u)(1-\rho_{L\sqcup R}(v)) + \epsilon'/33\, ,\end{equation} and 
\[\dcut(G,\rho) = \frac{\sum_{u\rightarrow v} \rho(u)(1-\rho(v))}{m}\, .\]
To see why \cref{eqn:rho} is true, observe that for $i_1\sim \Unif{[\appdeg[u]]}$, we have \[\rho_{L\sqcup R}(u)-\epsilon'/99\le \Pr[i_1<\deg(u)\text{ and }(u,i_1)\in L]\le \rho_{L\sqcup R}(u) + \epsilon'/99 \, ,\]since for every $u\in V$, $\appdeg[u]$ satisfies \[(1-\epsilon'/100)\deg(u) \le \appdeg[u]\le (1+\epsilon'/100)\deg(u)\, ,\]
Since all the edges were sampled independently and $d = 80/(\epsilon')^2$, we can apply Hoeffding's bound (\cref{thm:hoeffding's}) to conclude that with probability at least $1-\exp(-4m)$, \[\left|\frac{\sum_{e\in E, 1\le \mathsf{i}\le d} X_{e,\mathsf{i}}}{dm} -  \dcut(G,\rho_{L\sqcup R})\right|\le \epsilon'/4\, .\]
Conditioned on $|\tilde{E}|\ge md(1-\epsilon'/8)$, we have
\[\left|\dcut(\tilde{G}, L\sqcup R) -  \dcut(G,\rho_{L\sqcup R})\right|\le \epsilon'/2\, .\]
Since $\mdcut(G) = \max_{\rho:V\rightarrow\{0,1\}} \dcut(G,\rho) = \sup_{\rho:V\rightarrow[0,1]} \dcut(G,\rho)$, taking the union bound over the $2^{2m}$ possible partitions of $\tilde{V}$,\footnote{In particular, consider the $\mdcut$ partition $\rho^*:V\rightarrow \{0,1\}$ of $G$ and the partition $L\sqcup R$ of $\tilde{V}$ where $(v,i)\in L$ iff $\rho^*(v)=1$. By definition, $\rho_{L\sqcup R} = \rho^*$.} we conclude that with probability at least $1-\exp(-2m)$, \[|\mdcut(\tilde{G}) -  \mdcut(G)|\le \epsilon'/2\, .\]

\paragraph*{The total number of edges incident on high-degree vertices in $\tilde{G}$.} For a vertex $\tilde{v}\in \tilde{V}$, let $Y_{\tilde{v}} = \mathbbm{1}[\deg(\tilde{v}) > 11 d] \deg(\tilde{v})$. Then $Y = \sum_{\tilde{v}\in \tilde{V}} Y_{\tilde{v}}$ denotes the total sum of the degrees of vertices whose degree is larger than $11 d$. It suffices to show that for any $\tilde{v}$, $\Exp[Y_{\tilde{v}}]<1$. This is because applying Markov's inequality, we would get that $Y < 8m/7 < \epsilon' |\tilde{E}|/8$ with probability at least $7/8$, conditioned on $|\tilde{E}|\ge md(1-\epsilon'/8)$. Since $\mdcut(\tilde{G})\ge |\tilde{E}|/4$, the graph $\overline{G}$ obtained by deleting all the edges incident on vertices of degree larger than $11 d$ in $\tilde{G}$ would satisfy $|\mdcut(\overline{G}) -  \mdcut(\tilde{G})|\le \epsilon'/2$. We also get that $|\overline{E}|\ge md(1-\epsilon')/4$.

It remains to show that $\Exp[Y_{\tilde{v}}]<1$. We will argue that for $d'>11d$, $\Pr[\deg(\tilde{v}) \ge d']\le 2^{-d'}$ (thus, $\Exp[Y_{\tilde{v}}] \le \sum_{d'>11d} \frac{d'}{2^{-d'}} < 1$). 
Consider any vertex $\tilde{v}=(v,i)\in V$.
Consider any edge $e\in E$ that is incident on $v$. For $1\le j\le d$, let $Z_{e,j}$ be the indicator random variable for the event that in the $j$-th iteration corresponding to $e$, the condition in \cref{check_deg} is satisfied and the edge that is sampled is incident on $\tilde{v}$. We have \[\deg(\tilde{v}) = \sum_{e\in E:e\text{ is incident on }v}\sum_{1\le j\le d} Z_{e,j}\, .\]
Observe that $\Exp[Z_{e,j}] \le \frac{1}{\deg(v)(1-\epsilon'/100)}$
Thus, by linearity of expectation, $\Exp[\deg(\tilde{v})] \le \frac{d\deg(v)}{\deg(v)(1-\epsilon'/100)}\le 2d$ and since the edges were sampled independently, we can apply Chernoff bound (\cref{prelim:chernoff}) to conclude that $\Pr[\deg(\tilde{v}) \ge d']\le 2^{-d'}$.
\end{proof}

\section{Three-pass streaming algorithm for graphs of bounded average degree}\label{sec:algorithms}

In this section, for every constant $\epsilon>0$, we give a three-pass streaming algorithm that achieves $(1/2-\epsilon)$-approximation on any $n$-vertex graph with average degree \emph{at most} $n^{\delta}$, using space $O(n^{1- \delta})$, for some $\delta\in (0,1)$.

\begin{breakablealgorithm}
\caption{Three-pass streaming algorithm for $\mdcut$ on graphs of bounded average degree}\label{Alg:bounded_avg_degree_threepass}
\begin{algorithmic}[1]
\Statex \textbf{Parameters:} $n \in \mathbb{N}$, $\epsilon \in (0,1/2)$
\Statex \textbf{Input:} A stream $\vecsigma$ of edges of directed graph $G(V,E)$ on $n$ vertices and at most $n^{1+\delta}$ edges, where $\delta = \epsilon^{25/\epsilon}$

\Statex
\Statex \textbf{First pass:}
\State Maintain a counter for the number of edges $m$
\State Set a parameter $\beta \triangleq \epsilon^{20/\epsilon}$ \label{setting beta}
\State Initialize $V', E'=\emptyset$ 
\State Initialize an $n$-length array $\cnt$ (indexed by $V$) to be $0^n$ \label{count array}
\For{every edge $u\rightarrow v$ in the stream} 
\State Sample $z_1,z_2 \sim \Bern(n^{-\beta})$\footnote{We denote by $\Bern(p)$ the Bernoulli distribution where we draw $1$ with probability $p$ and $0$ with probability $1-p$.} independently \label{sampling_vertices_1}
\If {$z_1 =1$}
\State Add $(u,\cnt[u]$) to $V'$ and increment $\cnt[u]$ by $1$ \label{sampling_vertices_2}
\EndIf
\If {$z_2 =1$}
\State Add $(v,\cnt[v]$) to $V'$ and increment $\cnt[v]$ by $1$ \label{sampling_vertices_3}
\EndIf
\If{$|V'|>n^{1-3\beta/4}$} \label{terminate1}
\State Terminate the execution
\EndIf
\EndFor
\Statex

\Statex \textbf{Second pass:}
\If{$m\le n^{1-\epsilon^{1/\epsilon}}$}
\State Store all the edges and compute the $\mdcut$ value.
\Else
\For{every $v\in V$ such that $\cnt[v] > 0$}
\State Compute $\deg(v)$ in the stream
\State $\forall i \in [\cnt[v]]$, initialize $d(v,i)=0$
\EndFor
\EndIf
\Statex

\Statex \textbf{Third pass:}
\State Set a parameter $d=320/\epsilon^2$
\For{every edge $u\rightarrow v$ in the stream}
\State Initialize $\mathsf{i}=1$
\For {$\mathsf{i}\le d$}
\If{$\cnt[u]>0$} \label{edge__sampling_1}
\State Sample $i_1\sim \Unif{[\deg(u)]}$
\If {$i_1< \cnt[u]$}
\State Increment $d(u,i_1)$ by $1$ 
\EndIf
\EndIf
\If{$\cnt[v]>0$} \label{edge__sampling_2}
\State Sample $i_2\sim \Unif{[\deg(v)]}$
\If {$i_2< \cnt[v]$}
\State Increment $d(v,i_2)$ by $1$
\EndIf
\EndIf
\If{$\cnt[u]>0, \cnt[v]>0$ and $i_1< \cnt[u], i_2< \cnt[v]$}
\State Add $(u,i_1)\rightarrow(v,i_2)$ to $E'$ \label{adding_edge}
\If{$|E'|>n^{1-\epsilon^{25/\epsilon}}$} \label{terminate2}
\State Terminate the execution
\EndIf
\EndIf
\State Increment $\mathsf{i}$ by $1$
\EndFor
\EndFor
\Statex

\end{algorithmic}
\end{breakablealgorithm}

\paragraph*{Post processing steps of \cref{Alg:bounded_avg_degree_threepass}:}
\begin{enumerate}
    \item For every $(u,i)\in V'$ such that $d(u,i) > 11d=3520/\epsilon^2$, delete every edge incident on $(u,i)$ in $E'$ and decrement $d(\cdot)$ each of its endpoints by $1$. Finally, set $d(u,i) = 0$. Let $\mathcal{G}(V',E')$ be the directed graph on vertex set $V'$ and edge set $E'$. 
    \item  Let $\ell, c\in \mathbb{N}$ be set according to \cref{corollary_SSSV} for parameter $\epsilon''=\epsilon/8$. Let $D=11d$. Let $\mathcal{H}_{2m}$ be the hash family in \cref{corollary_SSSV} corresponding to directed graphs on $2m$ vertices. Sample $h\sim \Unif{\mathcal{H}_{2m}}$.
    \item For $T\in \DbAllTypesDeg{c}{\ell}{D}$, let $a_T$ denote the number of edges $e\in E'$ such that every vertex $(v,i)\in V'$ which is at a distance of at most $\ell-1$ from at least one of the endpoints of $e$ satisfies the property that $\deg_{\mathcal{G}}(v,i) = d(v,i)$, and $\nbrtype{\mathcal{G}}{h|_{V'}}{\ell}{e} = T$.
    \item Consider the distribution $\mathcal{D}\in \Delta(\DbAllTypesDeg{c}{\ell}{D})$, defined to be
    \[\mathcal{D}(T) = \frac{a_T \cdot n^{\beta |T|}}{\sum_{T'\in \DbAllTypesDeg{c}{\ell}{D}}a_{T'} \cdot n^{\beta|T'|}} \, , \] where $T\in\DbAllTypesDeg{c}{\ell}{D}$ and $|T|$ denotes the number of vertices in $T$.
    \item Let $\Local : \DbAllTypesDeg{c}{\ell}{D} \to [0,1]$ be the function in \cref{corollary_SSSV}. Output $\Exp_{T \sim \mathcal{D}}[\Local(T)]$.
\end{enumerate}

\begin{theorem}\label{thm: correctness of algorithm 1}
    For every $\epsilon\in (0,1/2)$, there exists $\mathsf{n}_0\in \mathbb{N}$ such that for all $n\ge \mathsf{n}_0$ and every directed graph $G$ on $n$ vertices and $m \le n^{1+\epsilon^{25/\epsilon}}$ edges, \cref{Alg:bounded_avg_degree_threepass} outputs a value $v$ satisfying
    \[
    \frac{\mdcut(G)}{2} - \epsilon \le v \le \mdcut(G) + \epsilon\, ,
    \] with probability at least $2/3$, using space $n^{1-\epsilon^{O(1/\epsilon)}}$.
\end{theorem}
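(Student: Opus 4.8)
The plan is to split on the edge count $m$ (computed exactly in the first pass). If $m\le n^{1-\epsilon^{1/\epsilon}}$, the second pass stores all $m$ edges in $\tilde O(n^{1-\epsilon^{1/\epsilon}})$ space and outputs $\mdcut(G)$ exactly, so the bound holds deterministically. The substance is the dense case $m>n^{1-\epsilon^{1/\epsilon}}$, where I would argue that \cref{Alg:bounded_avg_degree_threepass} non–black-box simulates running \cref{Alg:trevisan} on $G$ with exact degrees and then feeding the output to the sampling estimator of \cref{main_thm_SSSV}.

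Concretely, let $\overline G$ be the random graph produced by \cref{Alg:trevisan} on $G$ with $\appdeg=\deg$ and $d=320/\epsilon^2$ (i.e.\ $\epsilon'=\epsilon/2$, so that $d=80/(\epsilon')^2$ and the deletion threshold $11d=3520/\epsilon^2$ match \cref{Alg:bounded_avg_degree_threepass}). The key claim is that, conditioned on neither the first- nor third-pass termination test firing, the graph $\mathcal G(V',E')$ together with the counters $d(\cdot)$ is distributed exactly as: include each copy of $\overline V$ in $V''$ independently with probability $p=n^{-\beta}$, take the induced subgraph of $\overline G$ on $V''$, and record on each surviving copy its degree in $\tilde G$. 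Two points need verifying. First, \cref{Alg:bounded_avg_degree_threepass} labels sampled copies of a vertex by order of discovery rather than by their index in $\overline V$; since \cref{Alg:trevisan} distributes each edge's $d$ copies onto uniformly random copies of its endpoints, the reduction — and hence the induced-subgraph process above — is invariant under relabeling copies, and the pairwise-independent hash family used in post-processing is, up to this relabeling, again pairwise independent, so the joint law of $\mathcal G$ and its hash labeling is unchanged. Second, I would check that for $(v,i)\in V'$ the test ``$\deg_{\mathcal G}(v,i)=d(v,i)$'' of post-processing step~3 is precisely the event that every $\tilde G$-neighbor of $(v,i)$ lies in $V'$; an induction on distance then shows the edges counted by $a_T$ are exactly those whose radius-$\ell$ neighborhood in $\overline G$ (and its deleted $\tilde G$-attachments) lies in $V'$ with type $T$, and the rescaling $\mathcal D(T)=a_T n^{\beta|T|}/\sum_{T'}a_{T'}n^{\beta|T'|}$ is that of \cref{main_thm_SSSV} with $p=n^{-\beta}$.

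Granting this, correctness in the dense case chains the cited results. By \cref{thm:trevisan_reduction} with $\epsilon'=\epsilon/2$, with probability $\ge 5/6$ the graph $\overline G$ has $2m$ vertices, maximum degree at most $11d=:D$, $\Theta(m/\epsilon^2)$ edges, and $|\mdcut(\overline G)-\mdcut(G)|\le\epsilon/2$; on this event only a tiny fraction of copies are deleted and (by a standard balls-in-bins estimate, since $d\gg 1$) at most an $e^{-d}$-fraction of copies are isolated in $\tilde G$, so at most a $1/2$-fraction of $\overline V$ is isolated and we may take $\alpha=1/2$ in \cref{main_thm_SSSV}. The parameters are calibrated so that the hypotheses of \cref{main_thm_SSSV} hold: with $\epsilon''=\epsilon/8$ and $\ell,c$ as in \cref{corollary_SSSV}, one has $\beta=\epsilon^{20/\epsilon}<1/(4D^\ell)$, and $m>n^{1-\epsilon^{1/\epsilon}}$ forces $p=n^{-\beta}\ge(2m)^{-1/(4D^\ell)}$ — this is exactly why the sparse case must be peeled off, as for small $m$ the fixed rate $p$ would be too small relative to $|\overline V|=2m$. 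Then \cref{main_thm_SSSV} gives $\tvdist{\mathcal D}{\EdgeDist{\overline G}{h}{\ell}}\le\epsilon''$ with probability $\ge 9/10$ over $V'$ (for fixed $h$, conditioned on $\overline G$ good), \cref{corollary_SSSV} gives $\mdcut(\overline G)/2-\epsilon''\le\Exp_{T\sim\EdgeDist{\overline G}{h}{\ell}}[\Local(T)]\le\mdcut(\overline G)$ with probability $\ge 9/10$ over $h$, and \cref{corollary_TV_dist} yields $\mdcut(\overline G)/2-2\epsilon''\le\Exp_{T\sim\mathcal D}[\Local(T)]\le\mdcut(\overline G)+\epsilon''$; substituting $|\mdcut(\overline G)-\mdcut(G)|\le\epsilon/2$ and $\epsilon''=\epsilon/8$ gives $\mdcut(G)/2-\epsilon\le\Exp_{T\sim\mathcal D}[\Local(T)]\le\mdcut(G)+\epsilon$, the output.

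It remains to bound the failure probability and the space. Since $\delta=\epsilon^{25/\epsilon}<\beta/4$ for $\epsilon<1/2$, Markov's inequality applied to $\Exp[|V'|]=2mp\le 2n^{1+\delta-\beta}$ and $\Exp[|E'|]=dmp^2\le dn^{1+\delta-2\beta}$ shows the first- and third-pass thresholds $n^{1-3\beta/4}$ and $n^{1-\delta}$ are each exceeded with probability $o(1)$; union-bounding the terminations, the $\le 1/6$ failure of \cref{thm:trevisan_reduction}, the failures of \cref{main_thm_SSSV} and \cref{corollary_SSSV} (each driven below, say, $1/15$ by enlarging $n_0$ and the constants in \cref{main_thm_SSSV}), and the isolated-vertex estimate gives total failure $<1/3$ for $n\ge\mathsf{n}_0$. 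For space, $\cnt$, the degree array, and $d(\cdot)$ are stored sparsely with at most $|V'|\le n^{1-3\beta/4}$ nonzero entries of $O(\log n)$ bits, $|E'|\le n^{1-\delta}$, the hash costs $O(\log n)$ bits, and post-processing costs $O_\epsilon(1)$ words, for a total $\tilde O(n^{1-\delta})\le n^{1-\epsilon^{O(1/\epsilon)}}$. The main obstacle I expect is the distributional identity in the dense case: verifying that inline vertex/edge sampling with discovery-order relabeling plus the $d(\cdot)$ counters reproduces the induced subgraph of the Trevisan-reduced graph on a random vertex set, and in particular matching post-processing step~3's explored test against the cleaner condition in \cref{main_thm_SSSV} — the former is slightly more stringent (it also demands that the deleted high-degree $\tilde G$-neighbors of ball vertices be sampled), so one must observe that this discrepancy perturbs $\mathcal D$ by only $O(D^\ell\cdot 11d\cdot 2^{-11d})$ in total variation, which is absorbed into $\epsilon''$. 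Everything after that is routine.
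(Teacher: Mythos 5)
Your proposal is correct and follows essentially the same route as the paper's proof: the same split on $m$, the same non--black-box simulation of \cref{Alg:trevisan} composed with \cref{main_thm_SSSV} via the distributional identity under discovery-order relabeling (the paper's \cref{claim_alg}), the same chaining of \cref{thm:trevisan_reduction}, \cref{corollary_SSSV}, and \cref{corollary_TV_dist} with $\epsilon'=\epsilon/2$ and $\epsilon''=\epsilon/8$, and the same treatment of the termination events and space. The only point of divergence is your handling of post-processing step~3: the paper matches it to the condition of \cref{main_thm_SSSV} exactly, by asserting that after post-processing step~1 the counter $d(v,i)$ equals the degree of $(v,i)$ in $\overline{G}$ (so the test ``$\deg_{\mathcal G}(v,i)=d(v,i)$'' iterated to depth $\ell-1$ is precisely ``the radius-$\ell$ ball lies in $V'$''), rather than by absorbing a small total-variation perturbation as you propose.
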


\begin{proof}
We first analyze the space usage of \cref{Alg:bounded_avg_degree_threepass} and then argue its correctness.

\paragraph*{Space usage of \cref{Alg:bounded_avg_degree_threepass}.} It follows from \cref{terminate1,terminate2} that $|V'|,|E'| \le n^{1-\epsilon^{O(1/\epsilon)}}$. In \cref{count array}, we can maintain $\cnt$ as a linked list instead of an array, thus, reducing the memory required to store the non-zero entries in $\cnt$, which is at most $|V'|$. We initialized $\cnt$ as an $n$-length array to keep the presentation simple.

\paragraph*{Correctness of \cref{Alg:bounded_avg_degree_threepass}.} Consider the bounded-degree graph $\overline{G}$ that we constructed in \cref{thm:trevisan_reduction} such that \[\mdcut(G)-\epsilon/2 \le \mdcut(\overline{G}) \le \mdcut(G) + \epsilon/2\, .\] We will show that with probability at least 2/3, \[\tvdist{\mathcal{D}}{\EdgeDist{\overline{G}}{h}{\ell}}\le \epsilon/8\, .\]Therefore, combining \cref{thm:trevisan_reduction,corollary_SSSV,corollary_TV_dist}, the correctness of \cref{Alg:bounded_avg_degree_threepass} immediately follows.
Let us first recall the construction of $\overline{G}$ in \cref{Alg:trevisan} for the parameter $\epsilon' = \epsilon/2$, and when $\estdeg[v]=\deg(v)$ for every $v\in V$. For every vertex $v\in V$, we add $\deg(v)$ many copies of it to $\overline{V}$, i.e., $\overline{V}=\{(v,i):v\in V, i\in [\deg(v)]\}$. We fix a parameter $d=10/(\epsilon')^2 = 40/\epsilon^2$. For every edge $u\to v\in E$, we sample a copy $(u,i)$ of $u$ and a copy $(v,j)$ of $v$ uniformly at random from $\overline{V}$, and add the edge $(u,i)\to (v,j)$ to $\overline{E}$. we repeat this process $d$ times independently for every edge, and finally, for every vertex of degree larger than $11 d$ in $\overline{V}$, we delete all the edges incident to it. We prove that with probability at least $5/6$, \[\mdcut(G) - \epsilon/2 \le \mdcut(\overline{G}) \le \mdcut(G) + \epsilon/2\, .\]

For now, let us ignore the termination conditions in \cref{terminate1,terminate2}. We will later argue that these events occur with very low probability and hence, do not affect the success probability of the algorithm by much. We claim the following.

\begin{claim}\label{claim_alg}
At the end of the third pass (before post processing), the following hold:
    \begin{enumerate}
    \item $V'$ has the same distribution as sampling every vertex from $\overline{V}$ independently with probability $n^{-\beta}$, and applying a relabeling on the vertices, and \label{item 1_claim_alg}
    \item for $u\rightarrow v\in E$, if both $u$ and $v$ have copies in $V'$, then the copies of $u\rightarrow v$ in $\tilde{G}$ are sampled identically in \cref{Alg:trevisan,Alg:bounded_avg_degree_threepass}. If only one of $u$ or $v$ has a copy in $V'$, then the endpoints corresponding to the copies of that vertex in the copies of $u\rightarrow v$ in $\tilde{G}$ are sampled identically in \cref{Alg:trevisan,Alg:bounded_avg_degree_threepass}.\label{item 2_claim_alg}
\end{enumerate} 
\end{claim}

\begin{proof}
Consider the map $f:\overline{V}\rightarrow E$ defined as follows: for $(v,i)\in \overline{V}$, $f(v,i)=e$, where $e$ is the $(i+1)$-th edge incident on $v$ in the stream $\vecsigma$. Observe that for every edge $u\rightarrow v$, there are exactly two vertices in $\overline{V}$ that map to it, namely $(u,i)$ and $(v,j)$ where $i,j$ are such that $e$ is the $(i+1)$-th edge (resp. $(j+1)$-th edge) incident on $u$ (resp. $v$) in the stream $\vecsigma$. Therefore, sampling every vertex in $\overline{V}$ independently with probability $n^{-\beta}$ is the same as sampling each vertex in $f^{-1}(e)$, for every edge $e$, independently with probability $n^{-\beta}$. This is exactly what we do in \cref{sampling_vertices_1,sampling_vertices_2,sampling_vertices_3} in \cref{Alg:bounded_avg_degree_threepass}. However, since we cannot maintain information about the streaming order of all the edges incident on every vertex in $V$, we relabel the copies of $v$ in $\overline{V}$ based on the order in which they are sampled by our algorithm. In particular, for a vertex $v\in V$, the first copy that is sampled is labeled $(v,0)$, the second copy that is sampled is labeled $(v,1)$, and so on. This proves \cref{item 1_claim_alg} of \cref{claim_alg}. To see \cref{item 2_claim_alg} of \cref{claim_alg}, recall how the edges were sampled for $\tilde{G}$ in \cref{Alg:trevisan}. For our purposes, it suffices to consider the sampling of the copies of only those edges in $E$ that are incident on a vertex $v\in V$ which has a copy in $V'$, i.e., $V'\cap \{(v,i):i\in [\deg(v)]\}\ne \emptyset$. \cref{edge__sampling_1,edge__sampling_2} check for this condition and if they are satisfied, the sampling is identical to \cref{alg_samplei1_trev,alg_samplei2_trev} of \cref{Alg:trevisan}. 
For an edge $u\rightarrow v$, its copy is added to $E'$ in \cref{adding_edge} iff the sampled copies of both its endpoints are in $V'$. Observe that for $(v,i)\in V'$, all the edges incident on $v$ in $E$ satisfy either the condition in \cref{edge__sampling_1} or \cref{edge__sampling_2}.
\end{proof}
Given the above claim, we consider $\tilde{G}$ where for $u\rightarrow v\in E$, if both $u$ and $v$ have copies in $V'$, then the copies of $u\rightarrow v$ are sampled according to \cref{Alg:bounded_avg_degree_threepass} (prior to the post-processing steps). If only one of $u$ or $v$ has a copy in $V'$, then the endpoints corresponding to the copies of that vertex in the copies of $u\rightarrow v$ are sampled according to \cref{Alg:bounded_avg_degree_threepass} and the other endpoints are sampled independently, according to \cref{Alg:trevisan}. The remaining edges are also sampled according to \cref{Alg:trevisan}.
Finally, like in \cref{Alg:trevisan}, we define $\overline{G}$ as $\tilde{G}$ with edges incident on vertices of degree larger than $11d$ deleted.\footnote{A key observation here is that sampling $\overline{G}$ first and then sampling $V'$ is equivalent to sampling $V'$ and then sampling $\overline{G}$.}
Observe that for $\overline{G}$ as defined above, $\mathcal{G}(V',E')$ is the induced subgraph of $\overline{G}$ on $V'$ and furthermore, at the end of the post-processing steps of \cref{Alg:bounded_avg_degree_threepass}, for $(v,i)\in V'$, $d(v,i)$ is degree of $(v,i)$ in $\overline{G}$.  This is true because for any vertex $(v,i)\in V'$, its degree in $\tilde{G}$ depends only on the randomness in \cref{Alg:bounded_avg_degree_threepass}.

We are now ready to show that with probability at least $3/4$, \[\mdcut(G)-\epsilon/2 \le \mdcut(\overline{G}) \le \mdcut(G) + \epsilon/2\, ,\]and \[\tvdist{\mathcal{D}}{\EdgeDist{\overline{G}}{h}{\ell}}\le \epsilon/8\, .\]
We apply \cref{thm:trevisan_reduction,main_thm_SSSV}. In particular, for \cref{thm:trevisan_reduction}, we set the parameter $\epsilon'=\epsilon/2$. 
We get that with probability at least $5/6$,
\begin{itemize}
    \item $\mdcut(G)-\epsilon/2 \le \mdcut(\overline{G}) \le \mdcut(G) + \epsilon/2$,
    \item the maximum degree of any vertex in $\overline{G}$ is at most $110/(\epsilon')^2 = 3520/\epsilon^2$, and
    \item $\overline{G}$ has at least $240 m/\epsilon^2$ edges.
\end{itemize}
Since $\overline{G}$ has exactly $2m$ vertices, it follows that at most $0.94$ fraction of them are isolated. Now conditioned on this $\overline{G}$, we set the parameters of \cref{main_thm_SSSV}. We set $\epsilon''=\epsilon/8, D=3520/\epsilon^2$, and $c,\ell,h$ according to \cref{Alg:bounded_avg_degree_threepass}, and $\alpha=0.94$. Now set $\beta_0$ and $n_0$ according to the proof of \cref{main_thm_SSSV}. It follows from \cref{claim_alg} that $V'$ is obtained by sampling every vertex in $\overline{V}$ independently with probability $n^{-\beta}$, and $\mathcal{G}(V',E')$ is the induced subgraph of $\overline{G}$ on $V'$. Say $2m\ge n_0$. To apply \cref{main_thm_SSSV}, observe that sampling $\overline{G}$ first and then sampling $V'$ is the same as sampling $V'$ first and then sampling $\overline{G}$, and relabeling of the copies of the vertices does not the change the distribution.
Thus, it remains to show the following.
\begin{enumerate}
    \item The sampling probability $n^{-\beta}$ is at least $(2m)^{-\beta_0}$, and
    \item $a_T$ defined in the third post-processing step of \cref{Alg:bounded_avg_degree_threepass} is the same as the $a_T$ defined in \cref{main_thm_SSSV}.
\end{enumerate}

If $m\le n^{1-\epsilon^{1/\epsilon}}$, then \cref{Alg:bounded_avg_degree_threepass} would simply store all the edges and compute the $\mdcut$ value. So without loss of generality, we can assume that $m>n^{1-\epsilon^{1/\epsilon}}$, and in this case,  $n^{-\beta}\ge (2m)^{-\beta_0}$. In \cref{claim_alg}, we also show that for every $(v,i)\in V'$, $d(v,i)$ is the degree of $(v,i)$ in $\overline{G}$. Hence, for any edge $e\in E'$, its radius-$\ell$ neighborhood in $\overline{G}$ is contained entirely within $\mathcal{G}$ iff for every vertex $(v,i)\in V'$ which is at a distance of at most $\ell-1$ from at least one of the endpoints of $e$ satisfies $\deg_{\mathcal{G}}(v,i) = d(v,i)$. It follows that the $a_T$ defined in the post-processing steps of \cref{Alg:bounded_avg_degree_threepass} is the same as the one in \cref{main_thm_SSSV}. Thus, conditioned on $\overline{G}$ satisfying the properties that we stated above and $2m\ge n_0$, with probability at least $9/10$,
\[\tvdist{\mathcal{D}}{\EdgeDist{\overline{G}}{h}{\ell}}\le \epsilon/8\, .\]
Set $\mathsf{n}_0 = \max\left\{n_0^{(1-\epsilon^{1/\epsilon})^{-1}},(\frac{1}{\epsilon^2})^{{\frac{1}{\epsilon}}^{\frac{25}{\epsilon}}}\right\}$. Thus, whenever $n\ge \mathsf{n_0}$, we have $m\ge n_0$.

We finally argue that the termination conditions in \cref{terminate1,terminate2} happen with very low probability.
Since $m\le n^{1+\epsilon^{25/\epsilon}}$ and $|\overline{V}|=2m$, we have $\Exp[V']\le n^{1-7\beta/8}$. By applying Chernoff bound, we conclude that with very high probability, $|V'| \le n^{1-3\beta/4} $.
For an edge $e\in \overline{E}$, let $Z_e$ be the indicator random variable for the event that $e\in E'$. Let $Z=\sum_{e\in \overline{E}} Z_e$.
We have $\Exp[Z_e] = n^{-2\beta} = n^{-2\epsilon^{20/\epsilon}}$ and $\Var[Z]\le (3520/\epsilon^2) \mathbb{E}[Z]$, since the maximum degree of any vertex in $\overline{G}$ is at most $3520/\epsilon^2$. Since $|\overline{E}|\le 320m/\epsilon^2$, $\Exp[Z] \le n^{1-\epsilon^{22/\epsilon}}$. By applying Chebyshev's inequality, we conclude that with high probability, $|E'|\le n^{1-\epsilon^{25/\epsilon}}$.
This concludes the proof of correctness of \cref{Alg:bounded_avg_degree_threepass}.
\end{proof}

\section{Two-pass streaming algorithm for $\mdcut$}\label{sec:twopass}
In this section, we modify\footnote{The specific modifications are highlighted in {\color{red} red}.} \cref{Alg:bounded_avg_degree_threepass} to give a two-pass streaming algorithm that achieves $(1/2-\epsilon)$-approximation on any $n$-vertex graph with average degree \emph{at most} $n^{\delta}$, using space $O(n^{1- \delta})$, for some $\delta\in (0,1)$. Combining our algorithm with the two-pass algorithm of Bhaskara, Daruki, and Venkatasubramanian \cite{DBLP:conf/icalp/BhaskaraDV18} (see \cref{thm:aditya}) that achieves for every $\epsilon, \delta\in (0,1)$, $(1-\epsilon)$-approximation for graphs of average degree \emph{at least} $n^{\delta}$ using space $\tilde{O}(n^{1-\delta})$, we obtain our algorithm that works on arbitrary graphs (see \cref{Alg:meta}).
\begin{breakablealgorithm}
\caption{Two-pass streaming algorithm for $\mdcut$ on graphs of bounded average degree}\label{Alg:bounded_avg_degree_twopass}
\begin{algorithmic}[1]
\Statex \textbf{Parameters:} $n \in \mathbb{N}$, $\epsilon \in (0,1/2)$
\Statex \textbf{Input:} A stream $\vecsigma$ of edges of directed graph $G(V,E)$ on $n$ vertices and at most $n^{1+\delta}$ edges, where $\delta = \epsilon^{25/\epsilon}$

\Statex
\Statex \textbf{First pass:}
\State Maintain a counter for the number of edges $m$
\State Set a parameter $\beta \triangleq \epsilon^{20/\epsilon}$
\State Initialize $V', E'=\emptyset$ 
\State Initialize an $n$-length array $\cnt$ (indexed by $V$) to be $0^n$ \label{count array}
\State {\color{red} Initialize an $n$-length array $\estdeg$ (indexed by $V$) to be $0^n$}
\For{every edge $u\rightarrow v$ in the stream} 
\State Sample $z_1,z_2 \sim \Bern(n^{-\beta})$ independently \label{sampling_vertices_1}
\If {$z_1 =1$}
\State Add $(u,\cnt[u]$) to $V'$ and increment $\cnt[u]$ by $1$ \label{sampling_vertices_2}
\EndIf
\If {$z_2 =1$}
\State Add $(v,\cnt[v]$) to $V'$ and increment $\cnt[v]$ by $1$ \label{sampling_vertices_3}
\EndIf
\If{$|V'|>n^{1-3\beta/4}$}  \label{alg3_terminate_v'}
\State Terminate the execution
\EndIf
{\color{red}\State Sample $y_1,y_2 \sim \Bern(n^{-\beta/4})$ independently
\If {$y_1 =1$}
\State Increment $\estdeg[u]$ by $n^{\beta/4}$ 
\EndIf
\If {$y_2 =1$}
\State Increment $\estdeg[v]$ by $n^{\beta/4}$
\EndIf
\If {\# non-zero entries in $\estdeg> n^{1-\beta/8}$ }\label{Terminate_condition_est-deg}
\State Terminate the execution
\EndIf
}
\EndFor
\Statex
\For{every $v\in V$ such that $\cnt[v] > 0$}
\State  $\forall i\in [\cnt[v]]$, initialize $d(v,i)=0$
\EndFor
\State {\color{red} Initialize an $n$-length array $\storedeg$ (indexed by $v$) to be $0^n$}
\State {\color{red} Initialize $\hat{E} = \emptyset$ }
\Statex
\Statex \textbf{Second pass:}
\If{$m\le n^{1-\epsilon^{1/\epsilon}}$}
\State Store all the edges and compute the $\mdcut$ value.
\Else
\State Set a parameter $d=320/\epsilon^2$
\For{every edge $u\rightarrow v$ in the stream}
{\color{red} \If{$\cnt[u]>0$ and $\storedeg[u]< n^{2\beta/3}$}
\State Add $u\rightarrow v$ to $\hat{E}$ \label{add_edge_ehat}
\State Increment $\storedeg[u]$
\EndIf
\If{$\cnt[v]>0$ and $\storedeg[v]< n^{2\beta/3}$}
\State Add edge $u\rightarrow v$ to $\hat{E}$ if it was not previously added in \cref{add_edge_ehat} 
\State Increment $\storedeg[v]$
\EndIf}
\State Initialize $\mathsf{i}=1$
\For {$\mathsf{i}\le d$}
\If{$\cnt[u]>0$ and {\color{red} $\estdeg(u)>0$}}
\State {\color{red}  Sample $i_1\sim \Unif{[\estdeg(u)]}$} \label{alg:sample_i1}
\If {$i_1< \cnt[u]$}
\State Increment $d(u,i_1)$ by $1$ 
\EndIf
\EndIf
\If{$\cnt[v]>0$ and {\color{red} $\estdeg(v)>0$}}
\State {\color{red} Sample $i_2\sim \Unif{[\estdeg(v)]}$} \label{alg:sample_i2}
\If {$i_2< \cnt[v]$}
\State Increment $d(v,i_2)$ by $1$
\EndIf
\EndIf
\If{($\cnt[u]>0$, {\color{red} $\estdeg(u)>0$}, $i_1< \cnt[u]$) {\color{red} or} ($\cnt[v]>0$, {\color{red} $\estdeg(v)>0$}, $i_2< \cnt[v]$)}\label{alg3_store_alledges}
\State Add $(u,i_1)\rightarrow(v,i_2)$ to $E'$ 
\If{$|E'|>n^{1-\epsilon^{25/\epsilon}}$} \label{alg3_terminate_E'}
\State Terminate the execution
\EndIf
\EndIf
\State Increment $\mathsf{i}$ by $1$
\EndFor
\EndFor
\EndIf
\Statex
\Statex {\color{red} \textbf{Post-processing I: Resampling edges incident to low-degree vertices}:
\State Initialize $E''=E'$ \label{post_processing_start}
\For{every $u\in V$ such that $\cnt[u] > 0$ and $\storedeg[u] < n^{2\beta/3}$} \label{alg_reset_1}
\State Initialize $i=0$
\For{$i\in [\estdeg[u]]$} \label{alg_reset_2}
\For{every edge $(u,i)\rightarrow(v,j)\in E''$ or $(v,j)\rightarrow(u,i)\in E''$}
\State Delete the edge; decrement $d(u,i)$ (resp. $d(v,j)$) if $(u,i)\in V'$ (resp. $(v,j)\in V'$) \label{alg_reset_3}
\EndFor
\EndFor \label{alg_reset_5}
\EndFor \label{alg_reset_6}
\For{every $u\rightarrow v\in \hat{E}$}\label{post_resample_start}
\State Initialize $\mathsf{j}=1$
\For {$\mathsf{j}\le d$}
\If{$\cnt[u]>0$ and $\storedeg(u)< n^{2\beta/3}$}
\State Sample $j_1\sim \Unif{[\storedeg(u)]}$
\If{$j_1< \cnt[u]$}
\State Increment $d(u,j_1)$ by $1$
\EndIf
\If{$\cnt[v]>0$}
\If{$\storedeg(v)< n^{2\beta/3}$}
\State Sample $j_2\sim \Unif{[\storedeg(v)]}$
\Else
\State Sample $j_2\sim \Unif{[\estdeg(v)]}$
\EndIf
\If {$j_2< \cnt[v]$}
\State Increment $d(v,j_2)$ by $1$
\If{$j_1< \cnt[u]$}
\State Add $(u,j_1)\rightarrow(v,j_2)$ to $E''$
\EndIf
\EndIf
\EndIf
\ElsIf{$\cnt[v]>0$ and $\storedeg(v)< n^{2\beta/3}$}
\State Sample $j_2\sim \Unif{[\storedeg(v)]}$
\If{$j_2< \cnt[v]$}
\State Increment $d(v,j_2)$ by $1$
\EndIf
\If{$\cnt[u]>0$}
\State Sample $j_1\sim \Unif{[\estdeg(u)]}$
\If {$j_1< \cnt[u]$}
\State Increment $d(u,j_1)$ by $1$
\If{$j_2< \cnt[v]$}
\State Add $(u,j_1)\rightarrow(v,j_2)$ to $E''$
\EndIf
\EndIf
\EndIf
\EndIf
\State Increment $\mathsf{j}$ by $1$
\EndFor
\EndFor
} 
\State {\color{red} Delete every edge in $E''$ that is incident on a vertex not in $V'$} \label{post_processing_end}
\end{algorithmic}
\end{breakablealgorithm}

\paragraph*{Post-processing II (analogous to the post processing steps in \cref{Alg:bounded_avg_degree_threepass}):}
\begin{enumerate}

    \item For every $(u,i)\in V'$ such that $d(u,i) > 11 d=3520/\epsilon^2$, delete every edge incident on $(u,i)$ in ${\color{red} E''}$ and decrement $d(\cdot)$ each of its endpoints by $1$. Finally, set $d(u,i) = 0$. Let $\mathcal{G}(V',{\color{red} E''})$ be the directed graph on vertex set $V'$ and edge set ${\color{red} E''}$.
    \item  Let $\ell, c\in \mathbb{N}$ be set according to \cref{corollary_SSSV} for parameter $\epsilon''=\epsilon/8$. Let $D=11d$. Let $\mathcal{H}_{2m}$ be the hash family in \cref{corollary_SSSV} corresponding to directed graphs on $2m$ vertices. Sample $h\sim \Unif{\mathcal{H}_{2m}}$.
    \item For $T\in \DbAllTypesDeg{c}{\ell}{D}$, let $a_T$ denote the number of edges $e\in {\color{red} E''}$ such that every vertex $(v,i)\in V'$ which is at a distance of at most $\ell-1$ from at least one of the endpoints of $e$ satisfies the property that $\deg_{\mathcal{G}}(v,i) = d(v,i)$, and $\nbrtype{\mathcal{G}}{h|_{V'}}{\ell}{e} = T$.
    \item Consider the distribution $\mathcal{D}\in \Delta(\DbAllTypesDeg{c}{\ell}{D})$, defined as
    \[\mathcal{D}(T) = \frac{a_T \cdot n^{\beta |T|}}{\sum_{T'\in \DbAllTypesDeg{c}{\ell}{D}}a_{T'} \cdot n^{\beta|T'|}} \, , \] where $T\in\DbAllTypesDeg{c}{\ell}{D}$, and $|T|$ denotes the number of vertices in $T$.
    \item Let $\Local : \DbAllTypesDeg{c}{\ell}{D} \to [0,1]$ be the function in \cref{corollary_SSSV}. Output $\Exp_{T \sim \mathcal{D}}[\Local(T)]$.
\end{enumerate}

\begin{theorem}\label{thm: correctness of algorithm 2}
    For every $\epsilon\in (0,1/2)$, there exists $\mathsf{n}_0\in \mathbb{N}$ such that for all $n\ge \mathsf{n}_0$ and every directed graph $G$ on $n$ vertices and $m \le n^{1+\epsilon^{25/\epsilon}}$ edges, \cref{Alg:bounded_avg_degree_twopass} outputs a value $v$ satisfying
    \[
    \frac{\mdcut(G)}{2} - \epsilon \le v \le \mdcut(G) + \epsilon\, ,
    \] with probability at least $2/3$, using space $n^{1-\epsilon^{O(1/\epsilon)}}$.
\end{theorem}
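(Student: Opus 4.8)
The plan is to follow the structure of the proof of \cref{thm: correctness of algorithm 1}, reducing correctness to \cref{thm:trevisan_reduction}, \cref{main_thm_SSSV}, \cref{corollary_SSSV}, and \cref{corollary_TV_dist}; the only genuinely new work is showing that the two-pass bookkeeping reconstructs the output of \cref{Alg:trevisan} restricted to $V'$ for a suitable approximate-degree array. Concretely, I would instantiate \cref{Alg:trevisan} with $\epsilon'=\epsilon/2$ (so that its parameter $d=80/(\epsilon')^2$ matches the $d=320/\epsilon^2$ used here) and $\zeta=2\beta/3$, and let $\appdeg$ be the array equal to $\storedeg[v]$ when $\storedeg[v]<n^{2\beta/3}$ (equivalently, when $\deg(v)<n^{2\beta/3}$, i.e.\ the cap was not reached) and equal to $\estdeg[v]$ otherwise; then set $\overline{G}$ to be the random graph produced by \cref{Alg:trevisan} on $(G,\appdeg)$, keeping in mind the observation (as in the three-pass proof) that sampling $\overline{G}$ first and then $V'$ is equivalent to sampling $V'$ first and then $\overline{G}$. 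For space, the termination conditions in \cref{alg3_terminate_v',alg3_terminate_E',Terminate_condition_est-deg} bound $|V'|$, $|E'|$ (hence $|E''|$), and the support of $\estdeg$ by $n^{1-\Omega(1)}$, while $\hat{E}$ and $\storedeg$ have size at most $2|V'|\cdot n^{2\beta/3}=O(n^{1-\beta/12})$; maintaining $\cnt,\estdeg,\storedeg$ as linked lists over their nonzero support keeps the total space $n^{1-\epsilon^{O(1/\epsilon)}}$.

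Next I would prove the two-pass analog of \cref{claim_alg} in three parts. Part (a): $V'$ is distributed as an independent $n^{-\beta}$-subsample of $\overline{V}$ followed by a relabeling of the copies of each vertex --- this is verbatim the three-pass argument via the map from copies to incident edges. Part (b): with probability $1-o(1)$ the array $\appdeg$ above satisfies the input hypothesis of \cref{Alg:trevisan}; exactness on vertices of degree below $n^{2\beta/3}$ is immediate because $\storedeg[v]$ caps at $n^{2\beta/3}$ and so records $\deg(v)$ exactly whenever it stays below the cap, and the $(1\pm\epsilon'/100)$-multiplicative guarantee for vertices of degree $\geq n^{2\beta/3}$ follows from Hoeffding's inequality (\cref{thm:hoeffding's}) applied to $\estdeg[v]/n^{\beta/4}\sim\mathrm{Bin}(\deg(v),n^{-\beta/4})$ (which has polynomially many expected samples once $\deg(v)\geq n^{2\beta/3}$), together with a union bound over the at most $n$ vertices. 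Part (c): after Post-processing~I, $E''$ is, in distribution, exactly the edge set of the induced subgraph of $\overline{G}$ on $V'$ and, after Post-processing~II (which is identical to the three-pass proof), $d(v,i)=\deg_{\overline{G}}(v,i)$ for every $(v,i)\in V'$.

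The hard part will be Part (c): I must argue that the ``delete-then-resample'' step is distributionally equivalent to having run \cref{Alg:trevisan}'s edge sampling with $\appdeg$ (rather than with the raw, possibly-garbage $\estdeg$) from the start. The key points are: (i) every edge of $E$ incident on a sampled low-degree parent vertex is present in $\hat{E}$, since such a vertex has degree below the $n^{2\beta/3}$ cap and so all its incident edges were stored in the second pass, so the resampling loop has access to exactly the edges it must re-draw; (ii) the resampling draws each of the $d$ copies of such an edge using $\storedeg$ (the exact degree) for a low-degree endpoint and $\estdeg$ for a high-degree endpoint, matching \cref{alg_samplei1_trev,alg_samplei2_trev} of \cref{Alg:trevisan} under $\appdeg$, and the check ``$i_1<\cnt[u]$'' corresponds --- after conditioning on $V'$ first --- to ``the sampled global copy of $u$ lies in $V'$'', exactly as in the three-pass proof; (iii) since all per-edge, per-copy samples in \cref{Alg:trevisan} are mutually independent, first deleting the copies that touch a low-degree parent and then re-drawing precisely those copies yields the same joint distribution as drawing all copies correctly from the start, with the $d(\cdot)$ counters updated in lockstep so that they record the true degrees in the resulting $\tilde{G}$; and (iv) the final deletion of edges incident on vertices outside $V'$ in \cref{post_processing_end}, together with the degree-$>11d$ deletions in Post-processing~II, produces exactly the induced subgraph of $\overline{G}$ on $V'$ with the correct degree counters. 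The delicate bookkeeping is the casework over which of the two endpoints of an edge have copies in $V'$ and which parent degrees are known exactly, mirroring but extending \cref{claim_alg}.

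With Parts (a)--(c) in hand I would finish exactly as in the three-pass proof: conditioned on the high-probability events above, \cref{thm:trevisan_reduction} with $\epsilon'=\epsilon/2$ gives (w.p.\ $5/6$) that $\overline{G}$ has $2m$ vertices, maximum degree at most $3520/\epsilon^2$, at most a $0.94$-fraction of isolated vertices, and $|\mdcut(\overline{G})-\mdcut(G)|\le\epsilon/2$; then \cref{main_thm_SSSV} with $\epsilon''=\epsilon/8$, $D=3520/\epsilon^2$, $\alpha=0.94$, and sampling probability $n^{-\beta}\ge(2m)^{-\beta_0}$ (valid since we may assume $m>n^{1-\epsilon^{1/\epsilon}}$, the smaller case being handled by brute force in the second pass) yields $\tvdist{\mathcal{D}}{\EdgeDist{\overline{G}}{h}{\ell}}\le\epsilon/8$ w.p.\ $9/10$; and \cref{corollary_SSSV} with \cref{corollary_TV_dist} then give $\tfrac{\mdcut(G)}{2}-\epsilon\le v\le\mdcut(G)+\epsilon$. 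A union bound over the failure events --- the two reductions, the degree estimates of Part (b), and the three termination conditions, each shown to be $o(1)$ by Chernoff/Chebyshev as in the three-pass proof --- gives overall success probability at least $2/3$.
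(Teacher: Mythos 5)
Your proposal is correct and follows essentially the same route as the paper: the same instantiation of \cref{Alg:trevisan} with $\epsilon'=\epsilon/2$, $\zeta=2\beta/3$, the same hybrid array $\appdeg$ (exact degrees below the $n^{2\beta/3}$ threshold via $\storedeg$, Hoeffding-accurate $\estdeg$ above it), the same delete-then-resample coupling argument for the two-pass analog of \cref{claim_alg}, and the same final assembly via \cref{thm:trevisan_reduction}, \cref{main_thm_SSSV}, \cref{corollary_SSSV}, and \cref{corollary_TV_dist}. The only place you are slightly glib is the termination condition on $E'$, which in the two-pass setting requires a split into edges with low-degree versus high-degree parents (since $E'$ now stores all edges touching $V'$ and the first-pass sampling uses the possibly inaccurate $\estdeg$ for low-degree vertices), but this is exactly the casework you already flag as the delicate bookkeeping.
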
 

\begin{proof}
At a high level, the main change from \cref{Alg:bounded_avg_degree_threepass} is that for vertices randomly sampled from $\overline{V}$ in the first-pass, we don't use an additional pass like in \cref{Alg:bounded_avg_degree_threepass} to exactly compute the degrees of all their parent vertices. Instead, in the first pass, we approximately estimate the degrees of all high-degree vertices (of degree at least $n^{2\beta/3}$) with high probability. As shown in \cref{thm:trevisan_reduction}, for high-degree vertices, an approximate estimate of their degrees suffices for the reduction to work.
In the second pass, we handle the low degree vertices and store all the edges incident to parent vertices of degree less than $n^{2\beta/3}$ and resample their copies in $\overline{G}$ in the post-processing steps from \cref{post_processing_start} to \cref{post_processing_end}.
In addition, for technical reasons, $E'$ now stores \emph{all} edges incident on vertices in $V'$, not just the induced edges (see \cref{alg3_store_alledges}). This helps us maintain accurate degrees of the sampled vertices when we delete the edges sampled in the second pass that were incident to copies of low degree vertices and resample them later in post-processing I.
We now formally analyze the space usage and argue the correctness of \cref{Alg:bounded_avg_degree_twopass}.

\paragraph*{Space usage of \cref{Alg:bounded_avg_degree_twopass}.} Compared to \cref{Alg:bounded_avg_degree_threepass}, \cref{Alg:bounded_avg_degree_twopass} also stores $\estdeg$, $\storedeg$, and $\hat{E}$. It follows from the termination condition in \cref{Terminate_condition_est-deg} that $\estdeg$ has at most $n^{1-\beta/8}$ non-zero entries, and the value stored in each entry is at most $\text{poly}(n)$. Thus, the space required to store $\estdeg$ is at most $n^{1-\epsilon^{O(1/\epsilon)}}$. The termination condition in \cref{alg3_terminate_v'} ensures that $|V'|\le n^{1-3\beta/4}$. The number of non-zero entries in $\storedeg$ is at most $|V'|$. Finally, whenever an edge is added to $\hat{E}$, both its endpoints have copies in $V'$ and at least one of them currently has fewer than $n^{2\beta/3}$ edges incident on it in $\hat{E}$. Therefore, the number of edges stored in $\hat{E}$ is at most $|V'| n^{2\beta/3} \le n^{1-\beta/12}$.

\paragraph*{Correctness of \cref{Alg:bounded_avg_degree_twopass}.} For now, let us ignore the termination conditions in \cref{alg3_terminate_v',Terminate_condition_est-deg,alg3_terminate_E'}. We will later argue that these events occur with very low probability, and hence, do not alter the success probability of \cref{Alg:bounded_avg_degree_twopass} by much. We first prove the following claim.
\begin{claim}\label{clm:est-deg of high-degree vertices}
 Let $n$ be sufficiently large. With probability at least $1-\exp(n^{-\Omega(\epsilon^{O(1/\epsilon)})})$, for every $v\in V$ such that $\deg(v) \ge  n^{2\beta/3}$, we have \[(1-\epsilon/100)\deg(v) \le \mathsf{est}\text{-}\mathsf{deg}[v]  \le (1+\epsilon/100)\deg(v) \, .\]
\end{claim}
\begin{proof}
    Fix any vertex $v\in V$ whose degree is at least $n^{2\beta/3}$. For $i\in [\deg(v)]$, let $Y_i \sim \Bern(n^{-\beta/4})$ independently. Then, $\estdeg[v] = \sum_{i\in [\deg(v)]} Y_i \cdot n^{\beta/4}$. Therefore, $\Exp[\estdeg[v]] = \deg(v)$. By applying Hoeffding's inequality (\cref{thm:hoeffding's}), we conclude that with probability at least $1-\exp(n^{-\Omega(\epsilon^{O(1/\epsilon)})})$, \[(1-\epsilon/100)\deg(v) \le \mathsf{est}\text{-}\mathsf{deg}[v]  \le (1+\epsilon/100) \deg(v) \, .\] The claim follows from taking a union bound over at most $n$ vertices.
\end{proof}
Now let us condition on $\estdeg$ that satisfies the above property. We define an $n$-length array $\appdeg$ as follows. For every vertex $v\in V$, if $\deg(v) < n^{2\beta/3}$, then $\appdeg[v] := \deg(v)$, else $\appdeg[v] := \estdeg[v]$. Consider the bounded-degree graph $\overline{G}(\overline{V},\overline{E})$\footnote{We implicitly assume that \cref{Alg:trevisan} uses the same randomness as in \cref{Alg:bounded_avg_degree_twopass} to sample copies of the edges incident on the parent vertices of vertices in $V'$ and to sample the remaining edges, it uses independent random bits.} that is output by \cref{Alg:trevisan}, corresponding to the input graph $G(V,E)$, parameters $\epsilon'=\epsilon/2$ and $\zeta=2\beta/3$, and $\appdeg$. It follows from \cref{thm:trevisan_reduction} that with probability at least $5/6$,
\begin{itemize}
    \item $\mdcut(G)-\epsilon/2 \le \mdcut(\overline{G}) \le \mdcut(G) + \epsilon/2$,
    \item the maximum degree of any vertex in $\overline{G}$ is at most $110/(\epsilon')^2 = 3520/\epsilon^2$, and
    \item $\overline{G}$ has exactly $2m$ vertices and at least $240 m/\epsilon^2$ edges.
\end{itemize}
We will prove the following claim (analogous to \cref{claim_alg} in \cref{thm: correctness of algorithm 1}).
\begin{claim}\label{claim_alg2}
At the end of post-processing I, the following hold:
    \begin{enumerate}
    \item $V'$ has the same distribution as sampling every vertex from $\overline{V}$ independently with probability $n^{-\beta}$, and applying a relabeling on the vertices, and \label{item 1_claim_alg2}
    \item for $u\rightarrow v\in E$, if both $u$ and $v$ have copies in $V'$, then the copies of $u\rightarrow v$ in $\tilde{G}$ are sampled identically in \cref{Alg:trevisan,Alg:bounded_avg_degree_twopass}. If only one of $u$ or $v$ has a copy in $V'$, then the endpoints corresponding to the copies of that vertex in the copies of $u\rightarrow v$ in $\tilde{G}$ are sampled identically in \cref{Alg:trevisan,Alg:bounded_avg_degree_twopass}.\label{item 2_claim_alg2}
    \end{enumerate}
\end{claim}
\begin{proof}
The proof of \cref{item 1_claim_alg2} of \cref{claim_alg2} is analogous to the proof of \cref{item 1_claim_alg} of \cref{claim_alg}.
To prove \cref{item 2_claim_alg2}, we first consider the edges incident to high-degree vertices which have copies in $V'$.
Since $\appdeg$ and $\estdeg$ are the same for vertices of degree at least $n^{2\beta/3}$, the sampling in \cref{alg:sample_i1,alg:sample_i2} of \cref{Alg:bounded_avg_degree_twopass} is the same as the sampling in \cref{alg_samplei1_trev,alg_samplei2_trev} of \cref{Alg:trevisan} for edges whose both endpoints are of degree at least $n^{2\beta/3}$ in $G$. Similar, if only one of the endpoints of $u\rightarrow v$ has a copy in $V'$ and if the parent vertex of that copy is of degree at least $n^{2\beta/3}$, then the the endpoints corresponding to the copies of that vertex in the copies of $u\rightarrow v$ in $\tilde{G}$ are sampled identically in \cref{Alg:bounded_avg_degree_twopass,Alg:trevisan}.
Now consider the edges incident to low-degree vertices which have copies in $V'$. While these edges are \textbf{not} sampled with the correct probabilities in \cref{alg:sample_i1,alg:sample_i2}, this is fixed in the post processing steps from \cref{post_processing_start} to \cref{post_processing_end}.
For every $(u,i)\in V'$, observe that $\storedeg[u]<n^{2\beta/3}$ if and only if $\deg(u)<n^{2\beta/3}$. In addition, when $\deg(u)<n^{2\beta/3}$, $\storedeg[u] = \deg[u]$. In post-processing I, for every low-degree vertex that has a copy in $V'$, we delete all edges in $E''$ that are incident on \emph{any} of its copies, including those not contained in $V'$. All edges incident to low-degree vertices (which have a copy in $V'$) are stored in $\hat{E}$. Since $\storedeg[u] = \deg[u]$ for these vertices, we resample the copies of the edges incident on them identical to the sampling in \cref{Alg:trevisan}.
\end{proof}
Given the above claim, we consider $\tilde{G}$ where for $u\rightarrow v\in E$, if both $u$ and $v$ have copies in $V'$, then the copies of $u\rightarrow v$ are sampled according to \cref{Alg:bounded_avg_degree_twopass} (prior to post-processing II). If only one of $u$ or $v$ has a copy in $V'$, then the endpoints corresponding to the copies of that vertex in the copies of $u\rightarrow v$ are sampled according to \cref{Alg:bounded_avg_degree_twopass} and the other endpoints are sampled independently, according to \cref{Alg:trevisan}. The remaining edges are also sampled according to \cref{Alg:trevisan}.
Finally, like in \cref{Alg:trevisan}, we define $\overline{G}$ as $\tilde{G}$ with edges incident on vertices of degree larger than $11d$ deleted.
Observe that for $\overline{G}$ as defined above, $\mathcal{G}(V',E'')$ is the induced subgraph of $\overline{G}$ on $V'$.
Furthermore, at the end of post-processing II of \cref{Alg:bounded_avg_degree_twopass}, for $(v,i)\in V'$, $d(v,i)$ is degree of $(v,i)$ in $\overline{G}$. Now, the rest of the proof of correctness follows similarly as in the proof of \cref{thm: correctness of algorithm 1}. 

Finally, we argue that the termination conditions in \cref{alg3_terminate_v',Terminate_condition_est-deg,alg3_terminate_E'} happen with very low probability. Observe that $V'$ is sampled exactly in the same way as in \cref{Alg:bounded_avg_degree_threepass}. Therefore, the same proof goes through. To analyze the number of non-zero entries in $\estdeg$, for $i\in[\log n]$, consider any vertex $v\in V$ whose degree is larger than $2^i$ and at most $2^{i+1}$. Let $Z^i_v$ be the indicator variable for the event that $\estdeg[v]\ne 0$ and let $Z^i=\sum_{v:2^i< \deg(v)\le 2^{i+1}} Z^i_v$. Observe that $\Exp[Z^i_v]\le 2^{i+1}n^{-\beta/4}$ and by the linearity of expectation, \[\Exp[Z^i] \le \frac{2m}{2^i} \cdot 2^{i+1}n^{-\beta/4} = 4 m n^{-\beta/4} \le n^{1-\beta/6} \, .\] Since the events are independent, applying Chernoff bound (\cref{prelim:chernoff}), we conclude that with probability at least $1-\exp(n^{-\Omega(\epsilon^{O(1/\epsilon)})})$, $Z^i \le (\log n) n^{1-\beta/6}$. Taking a union bound over all $i\in [\log n]$, we get that with high probability, $\sum_{i\in [\log n] } Z^i \le (\log n)^2 n^{1-\beta/6} < n^{1-\beta/8}$.
We now analyze the termination condition on $E'$ in \cref{alg3_terminate_E'}. 
Let $E'_\ell$ denote the set of edges in $E'$ which are incident to vertices in $V'$ whose parent vertices have degree less than $n^{2\beta/3}$.
Let $E'_h$ denote the set of edges in $E'$ which are incident to vertices in $V'$ whose parent vertices have degree at least $n^{2\beta/3}$.
Observe that $E' = E'_\ell \cup E'_h$.
We previously argued (in \cref{sec:algorithms}) that with high probability, $|V'|\le n^{1-3\beta/4}$. Therefore, with high probability, $|E'_\ell|\le n^{1-\beta/12}$.
It follows from \cref{clm:est-deg of high-degree vertices} that with high probability, for every $v\in V$ such that $\deg(v) \ge  n^{2\beta/3}$, we have \[(1-\epsilon/100)\deg(v) \le \mathsf{est}\text{-}\mathsf{deg}[v]  \le (1+\epsilon/100)\deg(v) \, .\]
Therefore, conditioned on this event, $\Exp[|E'_h|]\le d n^{1+\delta} (2n^{-\beta}(1-\epsilon/100)^{-1}) \le n^{1-3\beta/4}$, for sufficiently large $n$. Applying Markov's inequality, we conclude that with high probability, $|E'_h|\le n^{1-2\beta/3}$. Thus, the termination condition on $E'$ in \cref{alg3_terminate_E'} occurs with very low probability, and this concludes the proof of correctness of \cref{Alg:bounded_avg_degree_twopass}.

\end{proof}
Below, we describe our algorithm for $\mdcut$ on arbitrary graphs.
\begin{algorithm}[H]
\caption{Two-pass streaming algorithm for $\mdcut$}\label{Alg:meta}
\begin{algorithmic}[1]
\Statex \textbf{Parameters:} $n \in \mathbb{N}$, $\epsilon >0$
\Statex \textbf{Input:} A stream $\vecsigma$ of edges of a directed graph $G(V,E)$ on $n$ vertices
\Statex

\Statex \textbf{First pass:}
\State Maintain a counter for the number of edges $m$ and execute the first passes of \cref{Alg:bounded_avg_degree_twopass} and the algorithm in  \cite{DBLP:conf/icalp/BhaskaraDV18}, in parallel
\Statex
\Statex \textbf{Second pass:}
\If {$m\le n^{1+\epsilon^{4/\epsilon}}$}
\State Continue executing the second pass of \cref{Alg:bounded_avg_degree_twopass} and abandon the execution of \cite{DBLP:conf/icalp/BhaskaraDV18}
\Else 
\State Abandon the execution of \cref{Alg:bounded_avg_degree_twopass} and execute the second pass of \cite{DBLP:conf/icalp/BhaskaraDV18}
\EndIf
\end{algorithmic}
\end{algorithm}

Finally, we conclude with a proof of our main theorem.
\begin{proof}[Proof of \cref{thm:main}]
The proof of correctness and the space usage of \cref{Alg:meta} immediately follows from \cref{thm: correctness of algorithm 2} and \cref{thm:aditya}.  
\end{proof}
\ifnum\doubleblind=0
\section*{Acknowledgment}
The author is supported in part by NSF award CCF 2348475. Part of the work was conducted when the author was visiting the Simons Institute for the Theory of Computing as a research fellow in the Sublinear Algorithms program. The author thanks Raghuvansh Saxena, Noah Singer, and Madhu Sudan for collaborating on earlier investigations into this problem.  The author thanks Aditya Bhaskara for discussions related to \cite{DBLP:conf/icalp/BhaskaraDV18}.
\fi
\printbibliography
\end{document}